\numberwithin{equation}{section}
\DeclareMathOperator{\sech}{sech}
\DeclareMathOperator{\csch}{csch}
\numberwithin{equation}{section}
\newtheorem{Prop}{Proposition}
\newtheorem{Thm}{Theorem}
\def \tyb#1{\hbox{\tiny{[{\it{#1}}]}}}
\def \ty#1{\hbox{\tiny{{\it{#1}}}}}
\def \wh#1{\widehat{#1}}
\def \wt#1{\widetilde{#1}}
\newcommand{\nn}{\nonumber}
\newcommand{\bA}{\boldsymbol{A}}
\newcommand{\ba}{\boldsymbol{a}}
\newcommand{\bC}{\boldsymbol{C}}
\newcommand{\bE}{\boldsymbol{E}}
\newcommand{\bF}{\boldsymbol{F}}
\newcommand{\bG}{\boldsymbol{G}}
\newcommand{\bH}{\boldsymbol{H}}
\newcommand{\bI}{\boldsymbol{I}}
\newcommand{\bK}{\boldsymbol{K}}
\newcommand{\bM}{\boldsymbol{M}}
\newcommand{\br}{\boldsymbol{r}}
\newcommand{\bS}{\boldsymbol{S}}
\newcommand{\bT}{\boldsymbol{T}}
\newcommand{\bu}{\boldsymbol{u}}
\newcommand{\bv}{\boldsymbol{v}}
\newcommand{\bw}{\boldsymbol{w}}
\newcommand{\Ga}{\boldsymbol{\Gamma}}
\newcommand{\st}{\hbox{\tiny\it{T}}}
\newcommand{\tc}{\,^t \hskip -2pt {\boldsymbol{c}}}
\def \cd#1{\accentset{\circ}{#1}}
\begin{document}

\title{Solutions of local and nonlocal discrete complex modified Korteweg-de Vries equations and continuum limits}

\author{Ya-Nan Hu, Shou-Feng Shen, Song-lin Zhao$^{*}$\\
\\\lowercase{\scshape{Department of Applied Mathematics, Zhejiang University of Technology,
Hangzhou 310023, P.R. China}}}
\email{*Corresponding Author: songlinzhao@zjut.edu.cn}

\begin{abstract}

Cauchy matrix approach for the discrete Ablowitz-Kaup-Newell-Segur equations is reconsidered,
where two `proper' discrete Ablowitz-Kaup-Newell-Segur equations and two `unproper' discrete Ablowitz-Kaup-Newell-Segur equations
 are derived. The `proper' equations admit local reduction, while the `unproper' equations admit nonlocal reduction.
By imposing the local and nonlocal complex reductions on the obtained discrete Ablowitz-Kaup-Newell-Segur equations, two local and nonlocal
discrete complex modified Korteweg-de Vries equations are constructed. For the obtained local and nonlocal
discrete complex modified Korteweg-de Vries equations, soliton solutions and Jordan-block solutions
are presented by solving the determining equation set.
The dynamical behaviors of 1-soliton solution are analyzed and illustrated. Continuum limits of the resulting local and nonlocal
 discrete complex modified Korteweg-de Vries equations are discussed.

\end{abstract}

\keywords{discrete cmKdV equations, Cauchy matrix reduction approach, solutions, dynamical behaviors, continuum limits.}

\maketitle

\section{Introduction}

In recent decade, the study of nonlocal nonlinear integrable equations lies at the forefront of research
in mathematical physics. This is the case since they
have been recognized as basic models for describing parity-charge-time symmetry in
quantum chromodynamics \cite{MPW}, electric circuits \cite{LSEK}, optics \cite{RMGCSK, MMGC},
Bose-Einstein condensates \cite{DGPS}, Alice-Bob events \cite{Lou-JMP,Lou-CTP}, and so forth. The first such equation, is
an integrable nonlocal nonlinear Schr\"{o}dinger equation
\begin{align}
\label{nl-NLS}
i\zeta_t+\zeta_{xx}+2\delta \zeta^2\zeta^*(-x,t)=0, \quad \delta=\pm 1,
\end{align}
proposed by Ablowitz and Musslimani \cite{AbMu-2013} as a special reduction of the
Ablowitz-Kaup-Newell-Segur (AKNS) hierarchy \cite{AKNS-1974}, where and whereafter
$i$ is the imaginary unit, and the asterisk $*$
represents complex conjugation. The equation \eqref{nl-NLS}
has parity-time symmetry owing to the self-induced potential
$V(x,t)=\zeta(x,t)\zeta^*(-x,t)=V^*(-x,t)$,
and the corresponding solution stated at distant locations $x$ and $-x$ are directly
coupled, reminiscent of quantum entanglement between pairs of particles.
From a mathematical point of view, studies of the nonlocal
equations are also interesting because these equations often feature in
distinctive types of solution behaviors, such as finite-time
solution blowup \cite{AbMu-2013,Yang-RW-NLS}, the simultaneous existence of solitons
and kinks \cite{JZ-JMAA}, the simultaneous existence of bright/dark solitons \cite{AbMu-2013,ALM-JMP},
and distinctive multisoliton patterns \cite{Yang-PLA}.

Following the introduction of this nonlocal nonlinear Schr\"{o}dinger equation, many other nonlocal integrable systems
have been proposed, one of which is the nonlocal complex modified Korteweg-de Vries (cmKdV) equation \cite{AM-mKdV}
\begin{align}
\label{nl-cmKdV}
U_t+U_{xxx}+24\delta UU^*(-x, -t)U_x=0, \quad \delta=\pm 1,
\end{align}
where the nonlocality is of reverse-space-time type.
In the past few years, the nonlocal cmKdV type equations started
to attract a lot of attention since
the local cmKdV type equations themselves have a wide range of physical applications in the propagation transverse-magnetic waves in nematic
optical fibers \cite{RREFM} and few-cycle optical pulses \cite{HWLPE}. Many methods have been developed to
construct a variety of exact solutions for the nonlocal cmKdV type equations.
In \cite{YY-SAPM}, Yang and Yang introduced a simple variable transformation to construct multisoliton and quasiperiodic solutions for
the nonlocal cmKdV equation from those of local cmKdV equation. With the help of
the Darboux transformations, various solutions with physical significance of the nonlocal cmKdV equation
were obtained in \cite{MSZ}, such as dark solitons, W-type solitons, M-type solitons, and periodic solutions.
Based on the double Wronskian solutions for the AKNS hierarchy,
bilinearization reduction scheme was developed by imposing a constraint
on the two basic vectors in double Wronskians such that two potential functions in the AKNS hierarchy
obey some nonlocal relations, which allowed us to obtain solutions of the nonlocal cmKdV equation \cite{CDLZ}.
Inverse scattering transform was developed to deal with the multisoliton solutions for a nonlocal cmKdV equation with nonzero boundary conditions
at infinity and constant phase shift \cite{Luo-IST}. Then
Riemann-Hilbert technique was applied to study soliton solutions for a generalized nonlocal cmKdV equation,
which includes the equation \eqref{nl-cmKdV} as a special case \cite{Ma-JGP}.
Hirota method and an improved Hirota bilinear method were also proposed to discuss soliton solutions for
the nonlocal cmKdV equation \cite{GP,YDL}.

Almost simultaneously, there has been rapid development of research on nonlocal integrable semi-discrete
nonlinear systems described by differential-difference equations,
involving exact solutions and dynamical behaviors \cite{AM-2014,DLZ-AMC,FZS-IJMPB,CNY,SMMC},
gauge equivalence \cite{MZ-JMP,MSZ-AML}, as well as integrability \cite{Ger,ALM-Nonl}.
These studies have tremendously increased our understanding about those particularly rich systems.
In recent years, the study of integrable discrete nonlinear systems governed by difference equations (lattice equations)
has made clear that these objects are not only interesting for their own sake but are, in fact,
more fundamental than their continuous and semi-discrete counterparts (see \cite{HJN-2016} and the references therein).
Quite expectedly, the properties of the nonlocal integrable continuous/semi-discrete nonlinear equations find
themselves reflected in the properties of their discrete analogues. The study of nonlocal integrable discrete
nonlinear systems originates from the proposal of nonlocal Adler-Bobenko-Suris (ABS) lattice equations \cite{ZKZ-SIGMA}.
For the whole equations in the two-component ABS lattice list \cite{BHQK}, they admit
reverse-$(n,m)$ and reverse-$n$ nonlocal reductions, respectively. In a series of works \cite{XFZ-TMPH,XZS-SAPM,ZXS-MMAS},
we introduce the Cauchy matrix and bilinearization reductions schemes for some nonlocal discrete integrable systems.
As a result, Cauchy matrix solutions to a discrete sine-Gordon equation and
double Casoratian solutions to a discrete sine-Gordon equation and a discrete mKdV equation were constructed.

These two methods have some resemblances. For instance, both methods are based on the solutions of before-reduced system.
In addition, they involve, first taking appropriate reductions to get the nonlocal integrable systems,
and second solving the matrix equation algebraically to derive the exact solutions. Even so, the bilinearization
reduction scheme is superior to the Cauchy matrix reduction scheme. This is because, in the Cauchy matrix reduction approach, solutions
of the original before-reduction AKNS system should satisfy two Sylvester equations \cite{XFZ-TMPH}. In
the real reduction case, the solvability of these two Sylvester equations usually conflicts with the
solvability of the Sylvester equation in the matrix equation set (for more detailed explanations,
one can refer to the conclusions in \cite{XZ-TMPH}).

In spite of the imperfection, Cauchy matrix reduction scheme is still an efficient approach
to investigate the solutions of nonlocal complex integrable systems \cite{XZ-TMPH,FZ-ROMP}. The Cauchy
matrix reduction scheme has closed connection with the so-called Cauchy matrix approach.
The latter one was firstly proposed by Nijhoff et al. to deal with the multisoliton solutions of the
ABS lattice equations \cite{NAJ-JPA} and subsequently extended by
Zhang and Zhao to obtain more kinds of exact solutions beyond soliton solutions \cite{ZZ-SAPM}.
The (generalized) Cauchy matrix approach is a byproduct of direct linearization \cite{FA,NQC}
and arises from the famous Sylvester equation \cite{Syl}. In the present paper, we
are interested in the solutions of nonlocal cmKdV equations reduced from the (third-order) discrete AKNS equations associated with the Cauchy
matrix framework introduced in \cite{ZS-ZNA}. Noting that the discrete AKNS equation given in \cite{ZS-ZNA}
doesn't admit nonlocal reduction, regardless of real reduction or complex reduction. Thus in this paper we will
reconsider the Cauchy matrix procedure of the discrete AKNS equation. We will show that there are four
discrete AKNS equations derived from the Cauchy matrix framework,
some of which admit local reduction, whereas the left ones admit reverse-$(n,m)$ nonlocal reduction. After imposing the local and
nonlocal complex reductions, two local and nonlocal discrete cmKdV equations are obtained. Solutions involving multisoliton
solutions and Jordan-block solutions for the resulting discrete cmKdV equations are discussed by solving the determining equation set (DES).
Continuum limits will be also discussed.

The paper is organized as follows. In Section 2, Cauchy matrix approach to the discrete AKNS equation is reestablished.
In Section 3, local and nonlocal complex reductions are imposed on the discrete AKNS equations and two
local and nonlocal cmKdV equations are listed. Exact solutions, including multisoliton
solutions and Jordan-block solutions, are constructed by solving the DES.
In Section 4, continuum limits are
introduced to discuss the local and nonlocal semi-discrete cmKdV equation and the local and nonlocal continuous
cmKdV equation. Section 5 is for conclusions.

\section{Cauchy matrix approach for discrete AKNS equation}

In this section we exhibit the Cauchy matrix framework for the discrete AKNS equation. We first
introduce a DES, including a Sylvester equation and two shift relations
of column vector $\br$. We then introduce master functions $\bS^{(i,j)}$ and list some properties.
As a consequence of combination of shift relations with the help of
recurrence relations and symmetric property of $\bS^{(i,j)}$, four discrete AKNS equations are revealed as closed-form.

We use the following convention for elementary shifts in the two directions of the lattice:
for a dependent variable $f$ defined on the two-dimensional lattice with discrete coordinates $(n,m)\in \mathbb{Z}^2$, e.g., $f=f(n,m)$,
the operations $f\rightarrow \wt{f}$ and $f\rightarrow \wh{f}$ denote elementary shifts in the two directions of the
lattice, i.e., $\wt{f}=f(n+1,m)$ and $\wh{f}=f(n,m+1)$, while for the combined shifts we have: $\wh{\wt{f}}=f(n+1,m+1)$.

\subsection{The DES and master functions $\bS^{(i,j)}$}

\underline{}We first introduce a DES
\begin{subequations}
\label{DES}
\begin{align}
& \label{SE}
\bK\bM-\bM\bK=\br\tc, \\
& \label{Shift-p}
(p\bI_{2N}-\bA\bK)\wt{\br}=(p\bI_{2N}+\bA\bK){\br}, \\
& \label{Shift-q}
(q\bI_{2N}-\bA\bK)\wh{\br}=(q\bI_{2N}+\bA\bK){\br},
\end{align}
\end{subequations}
with $(n,m)$-dependent functions $\br$ and $\bM$, and
$(n,m)$-independent nontrivial constant matrices $\bK$ and $\tc$,
in which matrices $\bA$, $\bK$, $\bM$, $\br$ and $\tc$ are of form
\begin{subequations}
\label{AKMrtc-def}
\begin{align}
& \bA=\left(
      \begin{array}{cc}
        \bI_{N_1} & \textbf{0}\\
        \textbf{0} & -\bI_{N_2} \\
        \end{array}
         \right),\quad
 \bK=\left(
        \begin{array}{cc}
          \bK_1 & \textbf{0}\\
          \textbf{0} & \bK_2 \\
        \end{array}
      \right),\quad
 \bM=\left(
        \begin{array}{cc}
          \textbf{0} & \bM_1 \\
          \bM_2 & \textbf{0} \\
        \end{array}
      \right),\\
& \br=\left(
        \begin{array}{cc}
          \br_1 & \textbf{0}\\
          \textbf{0} & \br_2 \\
        \end{array} \right),\quad
 \tc=\left(
        \begin{array}{cc}
          \textbf{0} & \tc_2 \\
          \tc_1 & \textbf{0} \\
        \end{array}
      \right),
\end{align}
\end{subequations}
where $\bK_i\in \mathbb{C}_{N_i\times N_i}$, $\bM_1\in \mathbb{C}_{N_1\times N_2}$,
$\bM_2\in \mathbb{C}_{N_2\times N_1}$, $\br_i \in \mathbb{C}_{N_i\times 1}$, $\tc_j \in \mathbb{C}_{1\times N_j}$ with $N_1+N_2=2N$.
Here and hereafter, $\bI_{N_j}$ means the $N_j$-th order unit matrix. By the definitions \eqref{AKMrtc-def},
we observe from the DES \eqref{DES} that
\begin{subequations}
\label{DES-comp}
\begin{align}
& \label{SE-M1}
\bK_1 \bM_1-\bM_1\bK_2=\br_1\, \tc_2, \\
& \label{SE-M2}
\bK_2 \bM_2-\bM_2\bK_1=\br_2\, \tc_1, \\
& (p\bI_{N_j}+(-1)^j\bK_j)\wt{\br}_j=(p\bI_{N_j}-(-1)^j\bK_j){\br}_j, \quad j=1,2, \\
& (q\bI_{N_j}+(-1)^j\bK_j)\wh{\br}_j=(q\bI_{N_j}-(-1)^j\bK_j){\br}_j, \quad j=1,2.
\end{align}
\end{subequations}
System \eqref{SE-M1} and \eqref{SE-M2} are the Sylvester equations and
have unique solution for $\bM_1$ and $\bM_2$ provided $\mathcal{E}(\bK_1)\bigcap \mathcal{E}(\bK_2)=\varnothing$,
where $\mathcal{E}(\bK_1)$ and $\mathcal{E}(\bK_2)$ denote the eigenvalue sets of $\bK_1$ and $\bK_2$, respectively.
In the rest parts of this section, we assume matrices $\bK_1$ and $\bK_2$ satisfy such a condition.
Besides, we also assume $1\notin \mathcal{E}(\bM_1\bM_2)$ and $|\bK|\neq0$.

To construct discrete AKNS equations in the Cauchy matrix framework, we introduce $2\times2$ matrix functions
\begin{align}
\label{Sij-def}
\bS^{(i,j)}=\tc \bK^j(\bI_{2N}+\bM)^{-1} \bK^i \br
=\left(\begin{array}{cc}
s^{(i,j)}_1 & s^{(i,j)}_2 \\
s^{(i,j)}_3 & s^{(i,j)}_4 \\
\end{array}
\right),\quad i,j\in\mathbb{Z},
\end{align}
whose components are expressed as
\begin{subequations}
\label{sij-ex}
\begin{align}
& \label{sij1-ex}
s^{(i,j)}_1=-\tc_2 \bK_2^{j}\bM_{2}(\bI_{N_1}-\bM_{1}\bM_{2})^{-1} \bK_1^{i} \br_1, \\
& \label{sij2-ex}
s^{(i,j)}_2=\tc_2 \bK_2^{j}(\bI_{N_2}-\bM_{2}\bM_{1})^{-1} \bK_2^{i} \br_2, \\
& \label{sij3-ex}
s^{(i,j)}_3=\tc_2 \bK_2^{j}(\bI_{N_2}-\bM_{2}\bM_{1})^{-1} \bK_2^{i} \br_2, \\
& \label{sij4-ex}
s^{(i,j)}_4=-\tc_1 \bK_1^{j}\bM_{1}(\bI_{N_2}-\bM_{2}\bM_{1})^{-1} \bK_2^{i} \br_2.
\end{align}
\end{subequations}
Since the matrix functions $\bS^{(i,j)}$ will be used to generate lattice equations, we call them as
master functions, which possess recurrence relations, shift relations, similarity invariance \cite{ZS-ZNA}
and symmetric property \cite{LQYZ-SAPM}. We list these four properties as follows.

\begin{Prop}(Recurrence relations)
For the master functions $\bS^{(i,j)}$ defined by \eqref{Sij-def} with $\bM,~\bK,~\br$ and $\tc$ satisfying the
Sylvester equation \eqref{SE}, the following relations hold
\begin{subequations}
\label{Sij-Recu}
\begin{align}
& \label{Sij-Recu-a}
\bS^{(i,j+s)}=\bS^{(i+s,j)}-\sum_{l=0}^{s-1}\bS^{(s-1-l,j)}\bS^{(i,l)}, \quad (s=1,2, \ldots), \\
& \label{Sij-Recu-b}
\bS^{(i,j-s)}=\bS^{(i-s,j)}+\sum_{l=-1}^{-s}\bS^{(-s-1-l,j)}\bS^{(i,l)}, \quad (s=1,2,\ldots).
\end{align}
\end{subequations}
In particular, when $s=1$ we have
\begin{subequations}
\label{Sij-Recu-s=1}
\begin{align}
& \label{Sij-Recu-s=1-a}
\bS^{(i,j+1)}=\bS^{(i+1,j)}-\bS^{(0,j)}\bS^{(i,0)}, \\
& \label{Sij-Recu-s=1-b}
\bS^{(i,j-1)}=\bS^{(i-1,j)}+\bS^{(-1,j)}\bS^{(i,-1)}.
\end{align}
\end{subequations}
\end{Prop}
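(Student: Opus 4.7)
The plan is to avoid induction entirely and read the full $s$-step identity directly off of the Sylvester equation. Set $U := (\bI_{2N} + \bM)^{-1}$, so that every master function takes the form $\bS^{(i, j)} = \tc \bK^j U \bK^i \br$ from \eqref{Sij-def}, and $U(\bI_{2N}+\bM) = (\bI_{2N}+\bM)U = \bI_{2N}$. Iterating \eqref{SE} by a telescoping sum gives the key commutator identity
\begin{equation*}
\bK^s \bM - \bM \bK^s \;=\; \sum_{l=0}^{s-1} \bK^{s-1-l}(\bK\bM - \bM\bK)\bK^l \;=\; \sum_{l=0}^{s-1} \bK^{s-1-l}\,\br\,\tc\,\bK^l,
\end{equation*}
and the left-hand side can equivalently be written $\bK^s(\bI_{2N}+\bM) - (\bI_{2N}+\bM)\bK^s$ by adding and subtracting $\bK^s$.

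For \eqref{Sij-Recu-a}, I would expand the right-hand sum via the definition of $\bS^{(i,j)}$, pulling the outer factors $\tc\bK^j U$ and $U\bK^i\br$ outside:
\begin{equation*}
\sum_{l=0}^{s-1}\bS^{(s-1-l, j)}\,\bS^{(i, l)} \;=\; \tc \bK^j\, U \Bigl(\sum_{l=0}^{s-1} \bK^{s-1-l}\,\br\,\tc\,\bK^l\Bigr) U\, \bK^i \br.
\end{equation*}
Substituting the identity above and absorbing the neighboring $(\bI_{2N}+\bM)$ into $U$ on each side, the bracketed factor collapses to $U\bK^s - \bK^s U$, so the whole sum equals $\tc \bK^j (U\bK^s - \bK^s U)\bK^i \br = \bS^{(i+s, j)} - \bS^{(i, j+s)}$. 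Rearranging yields \eqref{Sij-Recu-a}, and the $s=1$ specialization \eqref{Sij-Recu-s=1-a} is immediate.

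For \eqref{Sij-Recu-b}, I would run the same argument with $\bK$ replaced by $\bK^{-1}$; this is legitimate because $|\bK|\neq 0$ was assumed. Multiplying \eqref{SE} by suitable negative powers of $\bK$ on each side and telescoping the resulting sum over $a+b=s+1$ produces
\begin{equation*}
(\bI_{2N}+\bM)\bK^{-s} - \bK^{-s}(\bI_{2N}+\bM) \;=\; \sum_{l=-1}^{-s}\bK^{-s-1-l}\,\br\,\tc\,\bK^l,
\end{equation*}
and sandwiching between $\tc\bK^j U$ and $U\bK^i\br$, with the same $(\bI_{2N}+\bM)$–$U$ cancellation, reorganizes the right-hand sum of \eqref{Sij-Recu-b} into $\bS^{(i, j-s)} - \bS^{(i-s, j)}$. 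Specializing to $s=1$ then gives \eqref{Sij-Recu-s=1-b}. The only real pitfall is the sign and index bookkeeping in the negative-power telescoping identity; once the summation range $l\in\{-1,\dots,-s\}$ is matched with the factor $\bK^{-s-1-l}$, the cancellation against the outer $U$'s is as clean as in the positive-power case.
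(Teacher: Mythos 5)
Your argument is correct, and it is worth noting that the paper itself gives no proof of this proposition at all: it simply cites \cite{ZS-ZNA} for the recurrence relations. Your two telescoping identities check out ($\bK^s\bM-\bM\bK^s=\sum_{l=0}^{s-1}\bK^{s-1-l}\br\tc\bK^l$ and its negative-power analogue, the latter legitimized by the standing assumption $|\bK|\neq 0$), the sandwiching between $\tc\bK^j U$ and $U\bK^i\br$ is just associativity of matrix products since $\br\tc$ is a $2N\times 2N$ matrix sitting inside the string, and the cancellations $U(\bI_{2N}+\bM)=(\bI_{2N}+\bM)U=\bI_{2N}$ (guaranteed by $1\notin\mathcal{E}(\bM_1\bM_2)$) deliver exactly $\bS^{(i+s,j)}-\bS^{(i,j+s)}$ and $\bS^{(i,j-s)}-\bS^{(i-s,j)}$ with the signs matching \eqref{Sij-Recu-a} and \eqref{Sij-Recu-b}. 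The route usually taken in the Cauchy-matrix literature (and presumably in the cited reference) is slightly different: one introduces the auxiliary vectors $\bu^{(i)}=(\bI_{2N}+\bM)^{-1}\bK^i\br$, uses the Sylvester equation once to get $\bu^{(i+1)}=\bK\bu^{(i)}+\bu^{(0)}\bS^{(i,0)}$, contracts with $\tc\bK^j$ to obtain the $s=1$ relations \eqref{Sij-Recu-s=1}, and then iterates (inducts) to reach general $s$. Your version collapses the induction into a single closed-form commutator identity, which is arguably cleaner and makes the structure of the sum over $l$ transparent; the inductive version has the minor advantage of producing the auxiliary vector relations, which are reused elsewhere in the Cauchy matrix framework. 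Either way the statement is established.
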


\noindent\textbf{Remark 1.} Despite recurrence relations of master functions don't play vital role in the construction
of some integrable scalar lattice equations, such as ABS lattice equations \cite{NAJ-JPA,ZZ-SAPM} and lattice Boussinesq type equations
\cite{ZZN-SAPM}, they are indispensable in the construction of continuous KdV type equations \cite{Zhao-JNMP} and some discrete/continuous AKNS type
equations \cite{ZS-ZNA,Zhao-ROMP}.

\begin{Prop}(Shift relations)
For the master functions $\bS^{(i,j)}$ defined by \eqref{Sij-def} with $\bM,~\bK,~\br$ and $\tc$ satisfying the
DES \eqref{DES}, the following relations hold
\begin{subequations}
\label{Sij-shift}
\begin{align}
& \label{Sij-shift-a}
p\wt{\bS}^{(i,j)}+\ba\wt{\bS}^{(i,j+1)}=p\bS^{(i,j)}+\bS^{(i+1,j)}\ba-\bS^{(0,j)}\ba\wt{\bS}^{(i,0)}, \\
& \label{Sij-shift-b}
p\bS^{(i,j)}-\ba\bS^{(i,j+1)}=p\wt{\bS}^{(i,j)}-\wt{\bS}^{(i+1,j)}\ba+\wt{\bS}^{(0,j)}\ba\bS^{(i,0)}, \\
& \label{Sij-shift-c}
q\wh{\bS}^{(i,j)}+\ba\wh{\bS}^{(i,j+1)}=q\bS^{(i,j)}+\bS^{(i+1,j)}\ba-\bS^{(0,j)}\ba\wh{\bS}^{(i,0)}, \\
& \label{Sij-shift-d}
q\bS^{(i,j)}-\ba\bS^{(i,j+1)}=q\wh{\bS}^{(i,j)}-\wh{\bS}^{(i+1,j)}\ba+\wh{\bS}^{(0,j)}\ba\bS^{(i,0)},
\end{align}
\end{subequations}
where $\ba=\text{Diag}(1,-1)$.
\end{Prop}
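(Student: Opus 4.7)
My plan is to prove \eqref{Sij-shift-a} in detail; \eqref{Sij-shift-c} will follow by the identical argument with $p$ replaced by $q$ and tilde shifts replaced by hat shifts via \eqref{Shift-q} in place of \eqref{Shift-p}, while \eqref{Sij-shift-b} and \eqref{Sij-shift-d} will follow by the dual manipulation in which the roles of the two matrix factors $p\bI_{2N}\pm\bA\bK$ are interchanged. My strategy is to recast every term of \eqref{Sij-shift-a} into the common sandwich form $\tc\bK^{j}(\,\cdot\,)\bK^{i}\br$ or $\tc\bK^{j}(\,\cdot\,)\bK^{i}\wt{\br}$, then peel off the outer factors to reduce the identity to a single matrix identity that I verify directly from the DES.

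I first collect the commutation/anticommutation rules immediately visible from the block shapes in \eqref{AKMrtc-def}:
\begin{equation*}
\bA\bK=\bK\bA,\qquad \bA\br=\br\ba,\qquad \ba\tc=-\tc\bA,\qquad \bA\bM=-\bM\bA,\qquad \bA\br\tc+\br\tc\bA=\mathbf{0}.
\end{equation*}
Using $\ba\tc=-\tc\bA$ and $\bA\bK=\bK\bA$, the left-hand side of \eqref{Sij-shift-a} collapses to $\tc\bK^{j}(p\bI_{2N}-\bA\bK)(\bI_{2N}+\wt{\bM})^{-1}\bK^{i}\wt{\br}$; the first two terms on the right-hand side, via $\bA\br=\br\ba$ followed by \eqref{Shift-p}, combine into $\tc\bK^{j}(\bI_{2N}+\bM)^{-1}(p\bI_{2N}-\bA\bK)\bK^{i}\wt{\br}$; and the remaining term equals $\tc\bK^{j}(\bI_{2N}+\bM)^{-1}\br\tc\bA(\bI_{2N}+\wt{\bM})^{-1}\bK^{i}\wt{\br}$. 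Cancelling the common outer factors $\tc\bK^{j}$ and $\bK^{i}\wt{\br}$ and then multiplying by $\bI_{2N}+\bM$ on the left and by $\bI_{2N}+\wt{\bM}$ on the right, I reduce \eqref{Sij-shift-a} to the equivalent matrix identity
\begin{equation*}
\bM\,(p\bI_{2N}-\bA\bK)-(p\bI_{2N}-\bA\bK)\,\wt{\bM}=\br\tc\bA.
\end{equation*}

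The crux is then an intertwining lemma
\begin{equation*}
(p\bI_{2N}+\bA\bK)\,\bM=(p\bI_{2N}-\bA\bK)\,\wt{\bM},
\end{equation*}
which I will establish by setting $\boldsymbol{P}:=(p\bI_{2N}+\bA\bK)\bM-(p\bI_{2N}-\bA\bK)\wt{\bM}$ and computing, with the help of the Sylvester equations for $\bM$ and $\wt{\bM}$ together with $\bA\bK=\bK\bA$,
\begin{equation*}
\bK\boldsymbol{P}-\boldsymbol{P}\bK=\bigl[(p\bI_{2N}+\bA\bK)\br-(p\bI_{2N}-\bA\bK)\wt{\br}\bigr]\tc=\mathbf{0},
\end{equation*}
the final vanishing being precisely \eqref{Shift-p}. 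Since $\boldsymbol{P}$ inherits the block-antidiagonal shape of $\bM$ and $\wt{\bM}$, its two nontrivial blocks $\boldsymbol{P}_{1},\boldsymbol{P}_{2}$ satisfy the homogeneous Sylvester equations $\bK_{1}\boldsymbol{P}_{1}-\boldsymbol{P}_{1}\bK_{2}=\mathbf{0}$ and $\bK_{2}\boldsymbol{P}_{2}-\boldsymbol{P}_{2}\bK_{1}=\mathbf{0}$, which force $\boldsymbol{P}=\mathbf{0}$ by the standing hypothesis $\mathcal{E}(\bK_{1})\cap\mathcal{E}(\bK_{2})=\varnothing$.

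Substituting the intertwining lemma into the reduced form of \eqref{Sij-shift-a} and then applying $\bM\bA=-\bA\bM$ together with the Sylvester equation \eqref{SE} collapses it to $-\bA\br\tc=\br\tc\bA$, which is precisely the anticommutation $\bA\br\tc+\br\tc\bA=\mathbf{0}$ noted above. This closes \eqref{Sij-shift-a}, and by the parallel manipulations indicated at the outset also the remaining three identities in \eqref{Sij-shift}. I expect the intertwining lemma to be the only nontrivial step; everything else amounts to bookkeeping with the block structures in \eqref{AKMrtc-def}.
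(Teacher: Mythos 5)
Your argument is correct. Note that the paper itself does not prove this proposition --- it simply cites \cite{ZS-ZNA} for the shift relations --- so there is no in-text proof to compare against; what you have written is essentially the standard Cauchy-matrix derivation that the cited reference carries out. Your key step, the intertwining lemma $(p\bI_{2N}+\bA\bK)\bM=(p\bI_{2N}-\bA\bK)\wt{\bM}$, is exactly the shift relation for $\bM$ that such derivations establish first (from the two Sylvester equations at $(n,m)$ and $(n+1,m)$, the dispersion relation \eqref{Shift-p}, and the uniqueness of solutions to the homogeneous Sylvester equation under $\mathcal{E}(\bK_1)\cap\mathcal{E}(\bK_2)=\varnothing$), after which the relations for $\bS^{(i,j)}$ follow by propagating it through $(\bI_{2N}+\bM)^{-1}$. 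I verified the block commutation rules, the recasting of each term into the sandwich form, the reduction to $\bM(p\bI_{2N}-\bA\bK)-(p\bI_{2N}-\bA\bK)\wt{\bM}=\br\tc\bA$, and the final collapse via $\bA\bM=-\bM\bA$ and \eqref{SE}; all check out, and the same lemma indeed handles \eqref{Sij-shift-b} (and its $q$-analogues) as you indicate. One small wording point: after stripping the outer factors $\tc\bK^{j}$ and $\bK^{i}\wt{\br}$, the resulting matrix identity is \emph{sufficient} for \eqref{Sij-shift-a} rather than equivalent to it (those factors are not invertible), but since you prove the stronger identity the logic is sound.
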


\noindent\textbf{Remark 2.} Shift relations \eqref{Sij-shift} encode all the
information on the dynamics of the master functions $\bS^{(i,j)}$, w.r.t. the discrete
variables $n$ and $m$. In principle, one can derive closed-form lattice equations for individual elements chosen
from the $\bS^{(i,j)}$ as functions of the variables $n,~m$.

\begin{Prop}(Similarity invariance)
For the master functions $\bS^{(i,j)}$ defined by \eqref{Sij-def} with $\bM,~\bK,~\br$ and $\tc$ satisfying the
DES \eqref{DES}, they have the similarity invariance.
\end{Prop}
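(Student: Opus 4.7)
The plan is to show that if one subjects the data in the DES to a block-diagonal similarity transformation that respects the splitting $\mathbb{C}^{2N}=\mathbb{C}^{N_1}\oplus\mathbb{C}^{N_2}$, then $\bS^{(i,j)}$ is left unchanged. Concretely, I would pick any invertible block-diagonal matrix $\bT=\text{Diag}(\bT_1,\bT_2)$ with $\bT_j\in\mathbb{C}_{N_j\times N_j}$, and define
\begin{equation*}
\bK'=\bT\bK\bT^{-1},\qquad \bM'=\bT\bM\bT^{-1},\qquad \br'=\bT\br,\qquad \tc'=\tc\,\bT^{-1}.
\end{equation*}
Because $\bT$ is block-diagonal in the same pattern as $\bA$ and $\bK$, the matrices $\bK'$, $\bM'$, $\br'$, $\tc'$ retain the block structure prescribed in \eqref{AKMrtc-def}, and moreover $\bT$ commutes with $\bA$.

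First I would check that the new data still satisfies the DES. Conjugating \eqref{SE} by $\bT$ gives
\begin{equation*}
\bK'\bM'-\bM'\bK'=\bT(\bK\bM-\bM\bK)\bT^{-1}=\bT\br\tc\,\bT^{-1}=\br'\tc',
\end{equation*}
and since $\bA\bT=\bT\bA$, the shift relations \eqref{Shift-p}--\eqref{Shift-q} transform covariantly under $\br\mapsto\bT\br$, so $\br'$ obeys the analogous shift relations with the same spectral parameters $p,q$. Hence the triple $(\bK',\bM',\br',\tc')$ is an admissible solution of the DES.

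Next I would compute $\bS^{(i,j)}$ built from the primed data:
\begin{equation*}
(\bS^{(i,j)})'=\tc'(\bK')^j\bigl(\bI_{2N}+\bM'\bigr)^{-1}(\bK')^i\br'
=\tc\bT^{-1}\bigl(\bT\bK^j\bT^{-1}\bigr)\bigl(\bT(\bI_{2N}+\bM)\bT^{-1}\bigr)^{-1}\bigl(\bT\bK^i\bT^{-1}\bigr)\bT\br,
\end{equation*}
and all of the $\bT$'s telescope, leaving $\tc\bK^j(\bI_{2N}+\bM)^{-1}\bK^i\br=\bS^{(i,j)}$. This is the claimed invariance, so $\bS^{(i,j)}$ depends on $\bK$ only through its similarity class within the block-diagonal group, which is precisely what one needs to reduce the computation of explicit solutions to the canonical (e.g., diagonal or Jordan) form of each $\bK_j$.

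I do not anticipate a serious obstacle; the argument is essentially a direct telescoping once one fixes the correct covariance rule for $\br$ and $\tc$. The only point that deserves care is that the similarity must respect the $(N_1,N_2)$-block decomposition (so that $\bA$ is untouched and the block forms of $\br$, $\tc$, $\bM$ in \eqref{AKMrtc-def} are preserved); insisting on this restricted group of similarities is what makes the shift relations, and not merely the Sylvester equation, survive the transformation.
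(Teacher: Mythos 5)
Your proposal is correct and follows essentially the same route as the paper: a block-diagonal similarity $\bT=\text{Diag}(\bT_1,\bT_2)$ acting by $\bK\mapsto\bT\bK\bT^{-1}$, $\bM\mapsto\bT\bM\bT^{-1}$, $\br\mapsto\bT\br$, $\tc\mapsto\tc\bT^{-1}$, under which the DES is covariant and the $\bT$'s telescope in $\bS^{(i,j)}$. Your explicit remark that $\bT$ must commute with $\bA$ for the shift relations to survive is a point the paper leaves implicit, but it is the same argument.
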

\begin{proof}
Suppose that under transform matrix $\bT=\text{Diag}(\bT_1,\bT_2)$, matrix $\bar{\bK}$ is similar to $\bK$, i.e.,
\begin{subequations}
\label{simi-trans}
\begin{align}
\label{K-barK}
\bar{\bK}=\bT \bK \bT^{-1}.
\end{align}
We denote
\begin{align}
\label{bar-Mrtc}
\bar{\bM}=\bT \bM \bT^{-1},\quad \bar{\br}=\bT\br,\quad \bar{\tc}=\tc\bT^{-1}.
\end{align}
\end{subequations}
Then one can easily know that
\begin{align}
\label{Sij-K-barK}
\bS^{(i,j)}=\tc \bK^j(\bI_{2N}+\bM)^{-1} \bK^i \br=\bar{\tc}\bar{\bK}^j(\bI_{2N}+\bar{\bM})^{-1} \bar{\bK}^i \bar{\br},
\end{align}
and the DES \eqref{DES} yields
\begin{subequations}
\label{DES-bar}
\begin{align}
& \label{SE-bar}
\bar{\bK}\bar{\bM}-\bar{\bM}\bar{\bK}=\bar{\br}\bar{\tc}, \\
& \label{Shift-p-bar}
(p\bI_{2N}-\bA\bar{\bK})\wt{\bar{\br}}=(p\bI_{2N}+\bA\bar{\bK}){\bar{\br}}, \\
& \label{Shift-q-bar}
(q\bI_{2N}-\bA\bar{\bK})\wh{\bar{\br}}=(q\bI_{2N}+\bA\bar{\bK}){\bar{\br}}.
\end{align}
\end{subequations}
Thus the master functions $\bS^{(i,j)}$ are invariant and the DES \eqref{DES} is covariant under similar
transformation \eqref{simi-trans}.
\end{proof}

\noindent\textbf{Remark 3.} Since $\bS^{(i,j)}$ are similarity invariant under transformation \eqref{simi-trans},
in what follows we take matrix $\bK$ as its Jordan canonical form $\Ga=\text{Diag}(\Ga_1,\Ga_2)$. In other words,
we can replace the DES \eqref{DES-comp} and master functions $\bS^{(i,j)}$ \eqref{Sij-def} by
\begin{subequations}
\label{DES-comp-JC}
\begin{align}
& \label{SE-M1-JC}
\Ga_1 \bM_1-\bM_1\Ga_2=\br_1\, \tc_2, \\
& \label{SE-M2-JC}
\Ga_2\bM_2-\bM_2\Ga_1=\br_2\, \tc_1, \\
& (p\bI_{N_j}+(-1)^j\Ga_j)\wt{\br}_j=(p\bI_{N_j}-(-1)^j\Ga_j){\br}_j, \quad j=1,2, \\
& (q\bI_{N_j}+(-1)^j\Ga_j)\wh{\br}_j=(q\bI_{N_j}-(-1)^j\Ga_j){\br}_j, \quad j=1,2,
\end{align}
\end{subequations}
as well as
\begin{align}
\label{Sij-JC}
\bS^{(i,j)}=\tc \Ga^j(\bI_{2N}+\bM)^{-1} \Ga^i \br,
\end{align}
where condition $\mathcal{E}(\Ga_1)\bigcap \mathcal{E}(\Ga_2)=\varnothing$ is assumed to guarantee the solvability of
the Sylvester equations \eqref{SE-M1-JC} and \eqref{SE-M2-JC}. Naturally, the recurrence relations \eqref{Sij-Recu}
and shift relations \eqref{Sij-shift} still hold.

\begin{Prop}\label{Prop-Symm}
(Symmetric property) For the master functions $\bS^{(i,j)}$ given by \eqref{Sij-JC} with $\bM,~\Ga,~\br$ and $\tc$ satisfying the
Sylvester equations \eqref{SE-M1-JC} and \eqref{SE-M2-JC}, the elements in $\bS^{(i,j)}$ and $\bS^{(j,i)}$ are related as the following
\begin{align}
\label{sij-Symm}
s_1^{(i,j)}=-s_4^{(j,i)}, \quad s_2^{(i,j)}=s_2^{(j,i)}, \quad s_3^{(i,j)}=s_3^{(j,i)}, \quad i,j \in\mathbb{Z},
\end{align}
i.e., $\bS^{(i,j)^{\st}}=-\sigma_2\bS^{(j,i)}\sigma_2$, where $\sigma_2$ is the second pauli matrix, and
$^{\st}$ means the transpose.
\end{Prop}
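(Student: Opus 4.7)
The plan is to unpack the matrix identity $\bS^{(i,j)^{\st}} = -\sigma_2\bS^{(j,i)}\sigma_2$ into the three scalar relations $s_1^{(i,j)}=-s_4^{(j,i)}$, $s_2^{(i,j)}=s_2^{(j,i)}$, and $s_3^{(i,j)}=s_3^{(j,i)}$, using the explicit block expressions \eqref{sij1-ex}--\eqref{sij4-ex} together with the correct form $s_3^{(i,j)}=\tc_1\Ga_1^j(\bI_{N_1}-\bM_1\bM_2)^{-1}\Ga_1^i\br_1$ of the $(2,1)$ component. Since each of these is a scalar, I would rewrite it as its own trace and manipulate it under the cyclic symmetry of the trace, while repeatedly substituting the Sylvester equations \eqref{SE-M1-JC} and \eqref{SE-M2-JC} in the dyad-replacement form $\br_1\tc_2=\Ga_1\bM_1-\bM_1\Ga_2$ and $\br_2\tc_1=\Ga_2\bM_2-\bM_2\Ga_1$.

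For the mixed relation $s_1^{(i,j)}+s_4^{(j,i)}=0$, I would first bring both terms to a common form via the resolvent identities $\bM_1(\bI_{N_2}-\bM_2\bM_1)^{-1}=(\bI_{N_1}-\bM_1\bM_2)^{-1}\bM_1$ and $\bM_2(\bI_{N_1}-\bM_1\bM_2)^{-1}=(\bI_{N_2}-\bM_2\bM_1)^{-1}\bM_2$. Then, inside each trace, I would use $\Ga_1^i\br_1\tc_2\Ga_2^j = \Ga_1^{i+1}\bM_1\Ga_2^j-\Ga_1^i\bM_1\Ga_2^{j+1}$ and $\Ga_2^j\br_2\tc_1\Ga_1^i = \Ga_2^{j+1}\bM_2\Ga_1^i-\Ga_2^j\bM_2\Ga_1^{i+1}$ to eliminate the dyads $\br_1\tc_2$ and $\br_2\tc_1$. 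The four resulting traces pair off under cyclic permutation, and their signed sum collapses into the trace of a commutator, which vanishes identically.

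For the swap symmetries of $s_2$ and $s_3$ the key ingredient is the rank-two commutator relation $[\Ga_2,\bM_2\bM_1]=\br_2\tc_1\bM_1+\bM_2\br_1\tc_2$ (and the mirror identity $[\Ga_1,\bM_1\bM_2]=\br_1\tc_2\bM_2+\bM_1\br_2\tc_1$), which follows at once from combining the two Sylvester equations. Shifting a single $\Ga_2$ across $(\bI_{N_2}-\bM_2\bM_1)^{-1}$ in $s_2^{(i,j)}=\tc_2\Ga_2^j(\bI_{N_2}-\bM_2\bM_1)^{-1}\Ga_2^i\br_2$ introduces a rank-two correction whose factors, when expanded against \eqref{sij1-ex}--\eqref{sij4-ex}, reproduce exactly the quadratic term $s_1^{(0,j-1)}s_2^{(i,0)}+s_2^{(0,j-1)}s_4^{(i,0)}$ appearing on the $(1,2)$ component of the recurrence relation \eqref{Sij-Recu-s=1-a}. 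Iterating this commutation step to migrate the exponent $j$ all the way down to $0$ telescopes $s_2^{(i,j)}-s_2^{(j,i)}$ into a finite sum of quadratic products of lower-index $s$'s that is manifestly symmetric under $(i,j)\leftrightarrow(j,i)$, and therefore vanishes; the argument for $s_3$ is the mirror with $\Ga_1$ in place of $\Ga_2$.

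The main obstacle is the bookkeeping of these rank-two corrections: $\Ga_j$ does not commute with the resolvents $(\bI-\bM_a\bM_b)^{-1}$, and each elementary commutation generates a correction that must be matched with a specific quadratic in the $s^{(k,l)}_{\ast}$. The cleanest way to absorb this is to recognize the corrections as precisely the quadratic terms prescribed by the recurrence relations \eqref{Sij-Recu-s=1-a}--\eqref{Sij-Recu-s=1-b}, which turns an otherwise open-ended expansion into a finite induction on $\lvert i-j\rvert$.
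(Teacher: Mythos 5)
The paper itself offers no proof of Proposition \ref{Prop-Symm}: Remark 4 defers the verification to \cite{LQYZ-SAPM}, so your argument has to stand on its own. Your preparatory material is sound. You correctly repair the typo in \eqref{sij3-ex} (the printed $s_3^{(i,j)}$ duplicates $s_2^{(i,j)}$), and the dyad replacements $\br_1\tc_2=\Ga_1\bM_1-\bM_1\Ga_2$, $\br_2\tc_1=\Ga_2\bM_2-\bM_2\Ga_1$, the resolvent identities, the commutator $[\Ga_2,\bM_2\bM_1]=\br_2\tc_1\bM_1+\bM_2\br_1\tc_2$, and the identification of the commutation corrections with the quadratic terms of \eqref{Sij-Recu-s=1-a} are all correct; for instance one does obtain $s_2^{(i,j)}-s_2^{(i+1,j-1)}=-\bigl(s_1^{(0,j-1)}s_2^{(i,0)}+s_2^{(0,j-1)}s_4^{(i,0)}\bigr)$. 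The difficulty is that both of the steps where you declare that everything cancels are asserted rather than proved, and neither holds as stated.

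First, the trace pairing for $s_1^{(i,j)}+s_4^{(j,i)}=0$. Writing $\bR_1=(\bI-\bM_1\bM_2)^{-1}$, $\bR_2=(\bI-\bM_2\bM_1)^{-1}$ and using $\bR_1\bM_1=\bM_1\bR_2$, one Sylvester substitution followed by cyclic rearrangement reduces the signed sum of your four traces to
\begin{align*}
\mathrm{tr}\bigl([\bR_2,\Ga_2^{\,j+1}]\,\bM_2\Ga_1^{\,i}\bM_1\bigr)-\mathrm{tr}\bigl([\bR_2,\Ga_2^{\,j}]\,\bM_2\Ga_1^{\,i+1}\bM_1\bigr),
\end{align*}
which vanishes when $i=j=0$ but not identically, because $\bR_2$ does not commute with $\Ga_2$; expanding these commutators via your rank-two identity merely regenerates quadratic products of the $s^{(k,l)}_{\ast}$, i.e.\ you land back in the same recursion as the $s_2$ case rather than in a trivially vanishing trace of a commutator. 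Second, the telescoping for $s_2$ yields $s_2^{(i,j)}-s_2^{(j,i)}=-\sum_{l=i}^{j-1}\bigl(s_1^{(0,l)}s_2^{(i+j-1-l,0)}+s_2^{(0,l)}s_4^{(i+j-1-l,0)}\bigr)$ for $0\le i<j$; this is not ``manifestly symmetric'' — it vanishes only if one already knows $s_1^{(0,l)}=-s_4^{(l,0)}$ and $s_2^{(0,l)}=s_2^{(l,0)}$ for all $l\le i+j-1$, which are instances of the proposition with $|i-j|=l$ possibly exceeding the $|i-j|$ you induct on, so the induction is not well-founded. Moreover the correction terms for each of the three relations involve the other two, so they cannot be established one after another. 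A workable repair is a single simultaneous induction on $i+j$ covering all three relations (with \eqref{Sij-Recu-s=1-b} handling negative indices), but even then the boundary identities $s_1^{(n,0)}+s_4^{(0,n)}=0$ remain and are not disposed of by your pairing argument: the residual term $\mathrm{tr}\bigl(\Ga_1^{\,n}\bM_1[\bR_2,\Ga_2]\bM_2\bigr)$ survives and must be shown to cancel against the accumulated quadratic corrections. Until these steps are carried out explicitly, the proposal is an outline with the hard cancellations missing.
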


\noindent\textbf{Remark 4.} For the verification of Proposition \ref{Prop-Symm}, one can refer to \cite{LQYZ-SAPM}.
This proposition can be used to prove $\text{Det}(\bI_2-\bS^{(-1,0)})=1$, which is a core relation
in the construction of some AKNS type equations \cite{Zhao-JDEA,ZZS-TMPH}, and the studies of the
self-dual Yang-Mills equations \cite{LQYZ-SAPM,LQZ-PD}.

\subsection{The discrete AKNS equations}

Based on the symmetric property \eqref{sij-Symm}, we introduce the following new
variables based on the master functions $\bS^{(i,j)}$ as follows:
\begin{align}
\label{uvw-def}
\bu=\bS^{(0,0)}=\left(\begin{array}{cc}
u_1 & u_2 \\
u_3 & -u_1 \\
\end{array}
\right),\quad
\bv=\bI_2-\bS^{(-1,0)},\quad \bw=\bI_2+\bS^{(0,-1)},
\end{align}
where $u_j,~(j=1,2,3)$ are scalar functions. Here $\bv$ and $\bw$ are auxiliary functions and
variable $\bu$ is the object that we are interested in.

To proceed, shift relations \eqref{Sij-shift} with $i=j=0$ lead to
\begin{subequations}
\label{buS10S01-ma-shift}
\begin{align}
& p\wt\bu+\ba\wt{\bS}^{(0,1)}=p\bu+\bS^{(1,0)}\ba-\bu\ba\wt\bu, \\
& p\bu-\ba\bS^{(0,1)}=p\wt\bu-\wt{\bS}^{(1,0)}\ba+\wt\bu\ba\bu, \\
& q\wh\bu+\ba\wh{\bS}^{(0,1)}=q\bu+\bS^{(1,0)}\ba-\bu\ba\wh\bu, \\
& q\bu-\ba\bS^{(0,1)}=q\wh\bu-\wh{\bS}^{(1,0)}\ba+\wh\bu\ba\bu,
\end{align}
\end{subequations}
which moreover yield
\begin{align}
\label{u-shift}
(p-q)(\wh{\wt{\bu}}-\bu)+(\wt{\bu}-\wh{\bu})((p+q)\bI_2+\ba\bu-\ba\wh{\wt{\bu}})=0,
\end{align}
after eliminating variables $\bS^{(0,1)}$ and $\bS^{(1,0)}$. The entries $u_j,~(j=1,2,3)$ satisfy
\begin{subequations}
\label{buS10S01-shift}
\begin{align}
& \label{buS10S01-shift-a}
(p-q+\wh{u}_1-\wt{u}_1)(\wh{\wt{u}}_2-u_2)-(p+q+u_1-\wh{\wt{u}}_1)(\wh{u}_2-\wt{u}_2)=0, \\
& \label{buS10S01-shift-b}
(p-q+\wh{u}_1-\wt{u}_1)(\wh{\wt{u}}_3-u_3)-(p+q+u_1-\wh{\wt{u}}_1)(\wh{u}_3-\wt{u}_3)=0.
\end{align}
\end{subequations}

To seek other relations connecting these three entries, we take indices $i=-1, j=0$, respectively, $i=0, j=-1$
in \eqref{Sij-shift-a} and \eqref{Sij-shift-b} and have
\begin{subequations}
\label{bv-S11}
\begin{align}
p(\wt{\bv}-\bv)&=\ba\wt{\bS}^{(-1,1)}-\bu\ba\wt{\bv} \nn \\
 & =\ba\bS^{(-1,1)}-\wt{\bu}\ba\bv, \\
p(\wt{\bw}-\bw)&=\bS^{(1,-1)}\ba-\bw\ba\wt{\bu} \nn \\
 & =\wt{\bS}^{(1,-1)}\ba-\wt{\bw}\ba\wt{\bu}.
\end{align}
\end{subequations}
Inserting the recurrence relations (\eqref{Sij-Recu-s=1-a} with $i=0,j=-1$ and $i=-1,j=0$)
\begin{align}
\label{uvw-S11}
\bS^{(1,-1)}=\bw\bu, \quad \bS^{(-1,1)}=\bu\bv,
\end{align}
into \eqref{bv-S11} and noticing $\bv\bw=\bI_2$, we arrive at
\begin{subequations}
\label{bu-shift-pq}
\begin{align}
\label{bu-shift-p}
(p\bI_2+\bu\ba-\ba\wt{\bu})(p\bI_2+\ba\bu-\wt{\bu}\ba)=p^2\bI_2.
\end{align}
Obviously, the relation \eqref{bu-shift-p} holds also for the $\wh{\phantom{a}}$-shift,
simplify by replacing $p$ by $q$ and interchanging the roles of $n$ and $m$. Thus we have
\begin{align}
\label{bu-shift-q}
(q\bI_2+\bu\ba-\ba\wh{\bu})(q\bI_2+\ba\bu-\wh{\bu}\ba)=q^2\bI_2.
\end{align}
\end{subequations}
Taking \eqref{uvw-def} into \eqref{bu-shift-p} and \eqref{bu-shift-q} gives rise to
\begin{align}
\label{u1-albe}
p+u_1-\wt{u}_1=\alpha, \quad q+u_1-\wh{u}_1=\beta,
\end{align}
where
\begin{align}
\label{albe-def}
\alpha=\big(p^2-(u_2+\wt{u}_2)(u_3+\wt{u}_3)\big)^{\frac{1}{2}}, \quad
\beta=\big(q^2-(u_2+\wh{u}_2)(u_3+\wh{u}_3)\big)^{\frac{1}{2}}.
\end{align}
From \eqref{u1-albe}, we derive the following relations:
\begin{subequations}
\label{u1-albe-nu12}
\begin{align}
p-q+\wh{u}_1-\wt{u}_1&=\alpha-\beta=\wh{\alpha}-\wt{\beta}, \\
p+q+u_1-\wh{\wt{u}}_1&=\alpha+\wt{\beta}=\wh{\alpha}+\beta.
\end{align}
\end{subequations}
Taking \eqref{u1-albe-nu12} into \eqref{buS10S01-shift-a} yields
\begin{subequations}
\label{u2-albe-eps}
\begin{align}
& \label{u2-albe-eps-a}
(\alpha-\beta)(\wh{\wt{u}}_2-u_2)-(\alpha+\wt{\beta})(\wh{u}_2-\wt{u}_2)=0, \\
& \label{u2-albe-eps-b}
(\alpha-\beta)(\wh{\wt{u}}_2-u_2)-(\wh{\alpha}+\beta)(\wh{u}_2-\wt{u}_2)=0,\\
& \label{u2-albe-eps-c}
(\wh{\alpha}-\wt{\beta})(\wh{\wt{u}}_2-u_2)-(\alpha+\wt{\beta})(\wh{u}_2-\wt{u}_2)=0,\\
& \label{u2-albe-eps-d}
(\wh{\alpha}-\wt{\beta})(\wh{\wt{u}}_2-u_2)-(\wh{\alpha}+\beta)(\wh{u}_2-\wt{u}_2)=0.
\end{align}
\end{subequations}
In a similar fashion, substituting \eqref{u1-albe-nu12} into \eqref{buS10S01-shift-b}, we also get
\begin{subequations}
\label{u3-albe-eps}
\begin{align}
& \label{u3-albe-eps-a}
(\alpha-\beta)(\wh{\wt{u}}_3-u_3)-(\alpha+\wt{\beta})(\wh{u}_3-\wt{u}_3)=0, \\
& \label{u3-albe-eps-b}
(\alpha-\beta)(\wh{\wt{u}}_3-u_3)-(\wh{\alpha}+\beta)(\wh{u}_3-\wt{u}_3)=0,\\
& \label{u3-albe-eps-c}
(\wh{\alpha}-\wt{\beta})(\wh{\wt{u}}_3-u_3)-(\alpha+\wt{\beta})(\wh{u}_3-\wt{u}_3)=0,\\
& \label{u3-albe-eps-d}
(\wh{\alpha}-\wt{\beta})(\wh{\wt{u}}_3-u_3)-(\wh{\alpha}+\beta)(\wh{u}_3-\wt{u}_3)=0.
\end{align}
\end{subequations}

System (\eqref{u2-albe-eps-b}, \eqref{u3-albe-eps-b})
is nothing but the discrete AKNS equation given in \cite{ZS-ZNA}.
At first glance, we find that in \eqref{u2-albe-eps} and \eqref{u3-albe-eps} there is a considerable amount of redundancy.
For example, relation \eqref{u2-albe-eps-b} can be derived from \eqref{u2-albe-eps-a} by
replacing $p$ by $q$ whilst replacing $\wt{\phantom{a}}$-shift by $\wh{\phantom{a}}$-shift.
After removing the similar relations, we know that the remains are
\begin{subequations}
\label{u2-albe-eps-re}
\begin{align}
& \label{u2-albe-eps-re-a}
(\alpha-\beta)(\wh{\wt{u}}_2-u_2)-(\alpha+\wt{\beta})(\wh{u}_2-\wt{u}_2)=0, \\
& \label{u2-albe-eps-re-b}
(\wh{\alpha}-\wt{\beta})(\wh{\wt{u}}_2-u_2)-(\wh{\alpha}+\beta)(\wh{u}_2-\wt{u}_2)=0,
\end{align}
\end{subequations}
and
\begin{subequations}
\label{u3-albe-eps-re}
\begin{align}
& \label{u3-albe-eps-re-a}
(\alpha-\beta)(\wh{\wt{u}}_3-u_3)-(\alpha+\wt{\beta})(\wh{u}_3-\wt{u}_3)=0, \\
& \label{u3-albe-eps-re-b}
(\wh{\alpha}-\wt{\beta})(\wh{\wt{u}}_3-u_3)-(\wh{\alpha}+\beta)(\wh{u}_3-\wt{u}_3)=0.
\end{align}
\end{subequations}
Here we prefer to leaving system (\eqref{u2-albe-eps-d}, \eqref{u3-albe-eps-d})
rather than (\eqref{u2-albe-eps-c}, \eqref{u3-albe-eps-c}), since system (\eqref{u2-albe-eps-re-a},
\eqref{u3-albe-eps-re-b}) or (\eqref{u2-albe-eps-re-b},
\eqref{u3-albe-eps-re-a}) admits nonlocal reduction.
By combining the equations of \eqref{u2-albe-eps-re} and \eqref{u3-albe-eps-re}, there are four systems can be
obtained, i.e., systems (\eqref{u2-albe-eps-re-a}, \eqref{u3-albe-eps-re-a}), (\eqref{u2-albe-eps-re-a}, \eqref{u3-albe-eps-re-b}),
(\eqref{u2-albe-eps-re-b}, \eqref{u3-albe-eps-re-a}) and (\eqref{u2-albe-eps-re-b}, \eqref{u3-albe-eps-re-b}).
Among these systems, systems (\eqref{u2-albe-eps-re-a}, \eqref{u3-albe-eps-re-a})
and (\eqref{u2-albe-eps-re-b}, \eqref{u3-albe-eps-re-b}) are proper discretization of AKNS system since they
can reduce to the usual discrete mKdV equations by imposing local reductions $u_3=\delta u_2$ or $u_3=\delta u^*_2$ with
$\delta=\pm 1$. We view these two systems as `proper' discrete AKNS equations and the reserved two systems as `unproper'
discrete AKNS equations.
Cauchy matrix solution to the obtained discrete AKNS equations is given by
\begin{align}
\label{Solu-u2u3}
u_2=\tc_{2}(\bI_{N_2}-\bM_2\bM_1)^{-1}\br_2,
\quad u_3=\tc_{1}(\bI_{N_1}-\bM_1 \bM_2)^{-1}\br_1,
\end{align}
where $\{\bM_i,~\tc_i,~\br_i\}$ satisfy the canonical DES \eqref{DES-comp-JC}.

To summarize the structure obtained, we note that starting from the DES \eqref{DES},
depending dynamically on the lattice variables through the Sylvester equation \eqref{SE} and
the shift relations of column vector $\br$ given in \eqref{Shift-p} and \eqref{Shift-q},
we have defined $2\times 2$ matrix functions $\bS^{(i,j)}$, which possess recurrence relations,
shift relations, similarity invariance and symmetric property. From the shift relations of variable
$\bu$ \eqref{u-shift} and \eqref{bu-shift-pq}, discrete AKNS equations are derived
finally, some of which are `proper' discretization in the sense of local reduction.

\section{Local and nonlocal complex reductions of the discrete AKNS equations}

Despite the system (\eqref{u2-albe-eps-re-a}, \eqref{u3-albe-eps-re-b}) or (\eqref{u2-albe-eps-re-b}, \eqref{u3-albe-eps-re-a})
admits nonlocal real reduction, solutions for the reduced nonlocal real equation
can not be derived in the framework of Cauchy matrix reduction.
We in this section just aim to discuss local and nonlocal complex reductions
of the discrete AKNS equations obtained in the above section. For the sake of simplicity,
we denote $u_2=u$, $u_3=v$ and introduce the variables
\begin{align}
\label{munu-def}
\mu_{\sigma}=(p^2-\delta(u+\wt{u})(u^*_{\sigma}+\wt{u}^*_{\sigma}))^{\frac{1}{2}}, \quad
\nu_{\sigma}=(q^2-\delta(u+\wh{u})(u^*_{\sigma}+\wh{u}^*_{\sigma}))^{\frac{1}{2}}, \quad \sigma=\pm 1,
\end{align}
where we have used the notation $g_{\sigma}:=g(\sigma x_1,\sigma x_2)$ for the function $g:=g(x_1, x_2)$.
It is worth pointing out that for the shift operations, we have $\wt{g}_{\sigma}=g(\sigma (x_1+1),\sigma x_2)$,
$\wh{g}_{\sigma}=g(\sigma x_1,\sigma (x_2+1))$ and $\wh{\wt{g}}_{\sigma}=g(\sigma (x_1+1),\sigma (x_2+1))$.
We kindly remind the reader of the distinction between $u_{\sigma}$ as $\sigma=1$ and component $u_1$ in matrix.

\subsection{Reductions of the system (\eqref{u2-albe-eps-re}, \eqref{u3-albe-eps-re})}

The system (\eqref{u2-albe-eps-re}, \eqref{u3-albe-eps-re}) admits local and nonlocal
complex reductions $v=\delta u_{\sigma}^*$, and the resulting reduced equations are
\begin{align}
\label{lcmKdV-1}
(\mu_{\sigma}-\nu_{\sigma})(\wh{\wt{u}}-u)-(\mu_{\sigma}+\wt{\nu}_{\sigma})(\wh{u}-\wt{u})=0,
\end{align}
and
\begin{align}
\label{lcmKdV-2}
(\wh{\mu}_{\sigma}-\wt{\nu}_{\sigma})(\wh{\wt{u}}-u)-(\wh{\mu}_{\sigma}+\nu_{\sigma})(\wh{u}-\wt{u})=0.
\end{align}
In order to be more intuitive, we list the reduced cases in Table 1.
\begin{table}[H]
  \centering
  \caption{The reductions of the system (\eqref{u2-albe-eps-re}, \eqref{u3-albe-eps-re})}
  \label{tabel-1}
\begin{tabular}{|>{\centering\arraybackslash}m{2cm}|>{\centering\arraybackslash}m{2cm}|>
{\centering\arraybackslash}m{2cm}|>{\centering\arraybackslash}m{2cm}|>{\centering\arraybackslash}m{2cm}|}
   \cline{1-3}
   \diagbox[width=2.4cm]{\eqref{u3-albe-eps-re}}{\eqref{u2-albe-eps-re}}& \eqref{u2-albe-eps-re-a}& \eqref{u2-albe-eps-re-b}  \\
   \cline{1-3}
   \eqref{u3-albe-eps-re-a} & $v=\delta u^*$ & $v=\delta u_{-1}^*$ \\
   \cline{1-3}
   \eqref{u3-albe-eps-re-b} & $v=\delta u_{-1}^*$ & $v=\delta u^*$ \\
   \cline{1-3}
   \end{tabular}
\end{table}
When $\sigma=1$, both equations \eqref{lcmKdV-1} and \eqref{lcmKdV-2} are referred to as local discrete cmKdV equations.
While when $\sigma=-1$, these two equations are considered to be the nonlocal discrete cmKdV equations.
All the discrete cmKdV equations, both local and nonlocal, are preserved under
transformations $u\rightarrow -u$ and $u\rightarrow \pm iu$.

\subsection{Cauchy matrix solutions for equations \eqref{lcmKdV-1} and \eqref{lcmKdV-2}}

The key point in the construction of Cauchy matrix solutions to the local and nonlocal discrete cmKdV equations
is to establish the constraints of the elements $(\br_1, \tc_1, \bM_1)$ and $(\br_2, \tc_2, \bM_2)$
in the Cauchy matrix solution \eqref{Solu-u2u3} such that variables $u$ and $v$ satisfy $v=\delta u_{\sigma}^*$.
To this end, we assume $N_1=N_2=N$, namely matrices $\Ga_1$ and $\Ga_2$ have the same order.
For convenience, we denote $\bI_{N}=\bI$.

With regard to Cauchy matrix solutions of the local and nonlocal discrete cmKdV equations, we have the following result.
\begin{Thm}
\label{dcmKdV-solu-Thm}
The function
\begin{align}
\label{dcmKdV-solu}
u=\tc_{2}(\bI-\bM_2\bM_1)^{-1}\br_2
\end{align}
solves the local and nonlocal discrete cmKdV equation \eqref{lcmKdV-1} or \eqref{lcmKdV-2},
provided that the entities satisfy canonical DES \eqref{DES-comp-JC} and simultaneously obey the constraints
\begin{align}
\label{dcmKdV-M1M12}
\br_1=\varepsilon \bT \br^*_{2,\sigma}, \quad
\tc_{1}=\varepsilon \tc^{*}_{2}\bT^{-1}, \quad
\bM_1=-\delta\sigma\bT \bM^*_{2,\sigma}\bT^*,
\end{align}
in which $\bT\in \mathbb{C}_{N\times N}$ is a constant matrix satisfying
\begin{align}\label{dcmKdV-at-eq}
\Ga_1\bT+\sigma\bT\Ga^*_2=0,\quad \bC_1=\varepsilon\bT\bC_2^*,\quad \varepsilon^2=\varepsilon^{*^2}=\delta.
\end{align}
\end{Thm}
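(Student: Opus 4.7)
The plan is to derive Theorem \ref{dcmKdV-solu-Thm} by carrying the complex reduction $v=\delta u_\sigma^*$ down from the unreduced discrete AKNS systems of Section 2. Those systems are already known to be solved by the pair
\[
 u_2=\tc_{2}(\bI-\bM_2\bM_1)^{-1}\br_2,\qquad u_3=\tc_{1}(\bI-\bM_1\bM_2)^{-1}\br_1
\]
whenever $\{\br_j,\tc_j,\bM_j,\Ga_j\}$ obey the canonical DES \eqref{DES-comp-JC}, so the reduced equations \eqref{lcmKdV-1} and \eqref{lcmKdV-2} follow automatically once I show that the constraints \eqref{dcmKdV-M1M12}--\eqref{dcmKdV-at-eq} (a) are compatible with the DES and (b) force the identity $u_3=\delta u_{2,\sigma}^*$ at the level of the Cauchy matrix triplets. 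Depending on which of the four AKNS systems one starts from, the same reduction produces either \eqref{lcmKdV-1} or \eqref{lcmKdV-2}.

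For part (a), I will start from the Sylvester equation \eqref{SE-M2-JC} for $\bM_2$, apply complex conjugation combined with the $\sigma$-reflection $x_j\mapsto\sigma x_j$, and then sandwich the result between $-\delta\sigma\bT$ on the left and $\bT^*$ on the right. The intertwining relation $\Ga_1\bT=-\sigma\bT\Ga_2^*$ from \eqref{dcmKdV-at-eq}, together with its complex conjugate $\Ga_1^*\bT^*=-\sigma\bT^*\Ga_2$, converts both $\Ga_2^*$- and $\Ga_1^*$-terms into $\Ga_1$- and $\Ga_2$-terms that assemble into $\Ga_1\bM_1-\bM_1\Ga_2$; matching the right-hand side exploits $\varepsilon^*/\varepsilon=\delta$, which is forced by $\varepsilon^2=\varepsilon^{*2}=\delta$. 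This reproduces \eqref{SE-M1-JC}. A parallel calculation, using that $p,q\in\mathbb{R}$ so that the $\br_2$-shift relations in \eqref{DES-comp-JC} survive complex conjugation and $\sigma$-reflection, delivers the corresponding shift relations for $\br_1$. Hence the constraints on $(\br_1,\tc_1,\bM_1)$ are consistent with the DES.

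For part (b), I will compute
\[
 u_{2,\sigma}^*=\tc_2^*\bigl(\bI-\bM_{2,\sigma}^*\bM_{1,\sigma}^*\bigr)^{-1}\br_{2,\sigma}^*,
\]
substitute $\tc_2^*=\varepsilon^*\tc_1\bT$ and $\br_{2,\sigma}^*=\varepsilon^*\bT^{-1}\br_1$ (both following from \eqref{dcmKdV-M1M12} via $1/\varepsilon=\varepsilon^*$), and use a similarity transformation to rewrite
\[
 u_{2,\sigma}^*=\varepsilon^{*2}\tc_1\bigl(\bI-\bT\bM_{2,\sigma}^*\bM_{1,\sigma}^*\bT^{-1}\bigr)^{-1}\br_1.
\]
The core step is then the identity $\bT\bM_{2,\sigma}^*\bM_{1,\sigma}^*\bT^{-1}=\bM_1\bM_2$, obtained by inserting the conjugate-reflected form $\bM_{1,\sigma}^*=-\delta\sigma\bT^*\bM_2\bT$ of the third relation in \eqref{dcmKdV-M1M12} and then recognising $\bT\bM_{2,\sigma}^*\bT^*=-\delta\sigma\bM_1$; the two factors of $-\delta\sigma$ telescope to $(\delta\sigma)^2=1$. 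Combined with $\varepsilon^{*2}=\delta$, this gives $u_{2,\sigma}^*=\delta u_3$, i.e.\ $u_3=\delta u_{2,\sigma}^*$, as required.

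The main obstacle I anticipate is the sign bookkeeping between the four $\pm 1$ factors $\delta,\sigma,\varepsilon,\varepsilon^*$: a single stray sign in the intertwining step of (a) or in the telescoping of the $(\delta\sigma)^2$ factor in (b) invalidates the whole reduction, and the implicit identity $\varepsilon^*=\delta\varepsilon$ (equivalently $1/\varepsilon=\varepsilon^*$) must be invoked at precisely the right spots. Once the sign arithmetic is handled consistently, no further structural input beyond the known Cauchy matrix solution of the unreduced system and the symmetry \eqref{sij-Symm} is required.
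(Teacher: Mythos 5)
Your proposal is correct and follows essentially the route the paper relies on: the paper gives no proof of Theorem \ref{dcmKdV-solu-Thm} itself but defers the verification to \cite{XZ-TMPH}, and that verification is exactly your two-step argument — consistency of the constraints \eqref{dcmKdV-M1M12}--\eqref{dcmKdV-at-eq} with the canonical DES \eqref{DES-comp-JC}, followed by the algebraic identity $u_3=\delta u^{*}_{2,\sigma}$ obtained from $\varepsilon^{-1}=\varepsilon^{*}$, the intertwining relation $\Ga_1\bT=-\sigma\bT\Ga_2^{*}$, and the telescoping $(\delta\sigma)^2=1$. The one point worth making explicit is that for $\sigma=-1$ the reflection reverses the lattice shifts, so the conjugated--reflected form of \eqref{r2-p} and \eqref{r2-q} returns in the correct orientation only because those relations are reversible (forward and backward shifts can be interchanged by swapping the two sides) while $\bT\Ga_2^{*}\bT^{-1}=-\sigma\Ga_1$ simultaneously flips the sign of $\Ga$; the two effects compose to reproduce the $j=1$ shift relations for both $\sigma=\pm1$.
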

For the verification of this theorem, one can refer to \cite{XZ-TMPH}. Inserting \eqref{dcmKdV-M1M12} into \eqref{dcmKdV-solu}
and denoting $\cd{\bM}_2=\bM_2\bT$, we find that solution \eqref{dcmKdV-solu} can be rewritten as
\begin{align}
\label{dcmKdV-solu-u}
u=\tc_{2}(\bI+\delta\sigma\cd{\bM}_2\cd{\bM}^*_{2,\sigma})^{-1}\br_2,
\end{align}
where the entries $\{\tc_{2}, \cd\bM_2, \br_2\}$ satisfy the Jordan canonical DES
\begin{subequations}
\label{dcmKdV-so-nT}
\begin{align}
& \label{KM-rtc}
\Ga_2\cd{\bM}_2+\sigma\cd{\bM}_2\Ga^*_2=\varepsilon\br_2 \tc^{*}_{2}, \\
& \label{r2-p}
(p\bI+\Ga_2)\wt{\br}_2=(p\bI-\Ga_2){\br}_2, \\
& \label{r2-q}
(q\bI+\Ga_2)\wh{\br}_2=(q\bI-\Ga_2){\br}_2.
\end{align}
\end{subequations}
In order to give explicit expressions of $u$ in \eqref{dcmKdV-solu-u}, one just needs to solve the
matrix equations \eqref{dcmKdV-so-nT}. In this system, equations \eqref{r2-p} and \eqref{r2-q}
are used to give rise to column vector $\br_2$ and Sylvester equation \eqref{KM-rtc}
is applied to determine matrix $\cd{\bM}_2$. In terms of the linearity of \eqref{r2-p} and \eqref{r2-q}, one
recognizes that
\begin{align}
\br_2=(p\bI-\Ga_2)^{n}(p\bI+\Ga_2)^{-n}(q\bI-\Ga_2)^{m}(q\bI+\Ga_2)^{-m}\bC,
\end{align}
where $\bC$ is a constant column vector. With regard to the procedure of solving the Sylvester equation \eqref{KM-rtc}, one can
denote $\br_2=\bF\bE$, $\tc_2=\bE^{\st}\bH$ and $\cd{\bM}_2=\bF\bG\bH^*$, where $\bE$ is an appropriate column vector (cf. \cite{ZZ-SAPM}).

\subsubsection{Some notations}

Because $\Ga_2$ is of form canonical structure, it is possible to give a complete classification for the solutions.
We firstly introduce some notations, where usually the subscripts $_D$ and $_J$ correspond to
the cases of $\Ga_2$ being diagonal and being of Jordan-block, respectively. In what follows, we list some notations.
\begin{subequations}
\label{notations}
\begin{align}
\label{pwf}
\text{plane-wave~factors:} ~~
&\rho_j=\bigg(\frac{p-k_j}{p+k_j}\bigg)^n\bigg(\frac{q-k_j}{q+k_j}\bigg)^m\rho_j^0,~\mathrm{with~ constants~}\rho^0_j, \\
N\times N ~\mathrm{matrix:}~~
& \Ga_{\ty{D}}(\{k_j\}^{N}_{1})=\mathrm{Diag}(k_1, k_2, \ldots, k_{N}),
\end{align}
\begin{align}
N\times N ~\mathrm{matrix:}~~
& \Ga_{\ty{J}}(a)
=\left(\begin{array}{cccccc}
a & 0    & 0   & \cdots & 0   & 0 \\
1   & a  & 0   & \cdots & 0   & 0 \\
0   & 1  & a   & \cdots & 0   & 0 \\
\vdots &\vdots &\vdots &\vdots &\vdots &\vdots \\
0   & 0    & 0   & \cdots & 1   & a
\end{array}\right),\\
N\times N ~\mathrm{matrix:}~~
& \bF_{\ty{J}}(k_1)
=\left(
\begin{array}{ccccc}
\rho_1 & 0 & 0 & \cdots & 0\\
\frac{\partial_{k_1}\rho_1}{1!} & \rho_1 & 0 & \cdots & 0\\
\frac{\partial^{2}_{k_1}\rho_1}{2!} &\frac{\partial_{k_1}\rho_1}{1!} & \rho_1 & \cdots & 0\\
\vdots &\vdots &\vdots & \ddots & \vdots\\
\frac{\partial^{N-1}_{k_1}\rho_1}{(N-1)!} & \frac{\partial^{N-2}_{k_1}\rho_1 }{(N-2)!} & \frac{\partial^{N-3}_{k_1}\rho_1}{(N-3)!} & \cdots & \rho_1
\end{array}
\right),
\end{align}
\begin{align}
N\times N ~\mathrm{matrix:}~~
& \bH_{\ty{J}}(\{c_j\}^{N}_{1})
=\left(\begin{array}{ccccc}
c_1 & \cdots  & c_{N-2}  & c_{N-1} & c_N\\
c_2 & \cdots & c_{N-1}  & c_N & 0\\
c_3 &\cdots & c_N & 0 & 0\\
\vdots &\vdots & \vdots & \vdots & \vdots\\
c_N & \cdots & 0 & 0 & 0
\end{array}
\right),\\
N\text{-th~~column~~vector:}~~
& \bE_{\ty{D}}=(1, 1, \ldots, 1)^{\st}, \\
N\text{-th~~column~~vector:}~~
& \bE_{\ty{J}}=(1, 0, \ldots, 0)^{\st}, \\
N\times N ~\mathrm{matrix:}~~
& \bG_{\ty{D}}(\{k_i\}^{N}_{1})
=(a^{-1}_{ij})_{N \times N}, \\
N\times N ~\mathrm{matrix:}~~
& \bG_{\ty{J}}(k_1)=(g_{i,j})_{N\times N},
~~~g_{i,j}=\mathrm{C}^{i-1}_{i+j-2}(-1)^{i+j}\varepsilon\sigma^ja_{11}^{1-i-j},
\end{align}
\end{subequations}
where $a_{ij}=k_i+\sigma k^*_j$ and \[ \mathrm{C}^{i}_{j}=\frac{j!}{i!(j-i)!},\quad (j\geq i).\]

The $N$-th order matrix  in the following form
\begin{equation}
\mathcal{A}=\left(\begin{array}{cccccc}
a_0 & 0    & 0   & \cdots & 0   & 0 \\
a_1 & a_0  & 0   & \cdots & 0   & 0 \\
a_2 & a_1  & a_0 & \cdots & 0   & 0 \\
\vdots &\vdots &\cdots &\vdots &\vdots &\vdots \\
a_{N-1} & a_{N-2} & a_{N-3}  & \cdots &  a_1   & a_0
\end{array}\right)_{N\times N}
\label{A}
\end{equation}
with scalar elements $\{a_j\} \subset \mathbb{C}$ is a $N$-th order lower triangular Toeplitz matrix.
All such matrices compose a commutative set $\widetilde{G}^{\tyb{N}}$ with respect to matrix multiplication
and the subset
\[G^{\tyb{N}}=\big \{\mathcal{A} \big |~\big. \mathcal{A}\in \widetilde{G}^{\tyb{N}},~|\mathcal{A}|\neq 0 \big\}\]
is an Abelian group. Such kind of matrices play useful roles in the expression of exact solutions for soliton equations.
For more properties of such matrices one can refer to \cite{Z-KdV-2006}.

\subsubsection{Soliton solutions}

If $\Ga_2$ is a diagonal matrix as $\Ga_2=\Ga_{\ty{D}}(\{k_i\}^{N}_{1})$, then we have
\begin{align}
\label{Krtc-soli}
\br_2=\bF\cdot\bE_{\ty{D}}, \quad \tc_2=\bE^{\st}_{\ty{D}}\cdot\bH, \quad \cd{\bM}_2=\bF\cdot\bG\cdot\bH^*,
\end{align}
with $\bF=\Ga_{\ty{D}}(\{\rho_j\}^{N}_{1})$, $\bH=\Ga_{\ty{D}}(\{c_j\}^{N}_{1})$ and $\bG=\bG_{\ty{D}}(\{k_i\}^{N}_{1})$.
In this case, $u$ in \eqref{dcmKdV-solu-u} leads to multisoliton solutions.

In the case of $N=1$, we write down 1-soliton solution
\begin{align}
\label{u-soli-1}
u=\frac{c_1 a_{11}^2\rho_1}{a_{11}^2+\delta |c_1|^2\rho_1\rho^*_{1,\sigma}}.
\end{align}
The carrier wave in the local case ($\sigma=1$) reads
\begin{align}
\label{u-d-ca-I}
|u|^2=\Biggl\{
\begin{array}{ll}
\lambda^2\sech^2(\frac{1}{2}\ln (A_1^nB_1^m)+\ln C_1),& \text{with} \quad \delta=1, \\
\lambda^2\csch^2(\frac{1}{2}\ln (A_1^nB_1^m)+\ln C_1),& \text{with} \quad \delta=-1,\\
\end{array}
\end{align}
where we have taken $k_1=\lambda+i\chi$ and
\begin{align}
\label{ABC1}
A_1=\dfrac{(p-\lambda)^{2}+\chi^{2}}{(p+\lambda)^{2}+\chi^{2}}, \quad
B_1=\dfrac{(q-\lambda)^{2}+\chi^{2}}{(q+\lambda)^{2}+\chi^{2}}, \quad C_1=\frac{|c_1\rho_1^0|}{2|\lambda|}.
\end{align}

When $\delta=1$, \eqref{u-d-ca-I} describes a stable traveling wave with velocity $-\ln B_1/\ln A_1$
and amplitude $\lambda^2$, where the initial phase is $\ln C_1$. When $\delta=-1$, wave \eqref{u-d-ca-I}
has singularity along point trace $n=\ln (B_1^{-m}C_1^{-2})/\ln A_1$. We depict this soliton in Figure 1.

\begin{center}
\begin{picture}(120,100)
\put(-120,-23){\resizebox{!}{4cm}{\includegraphics{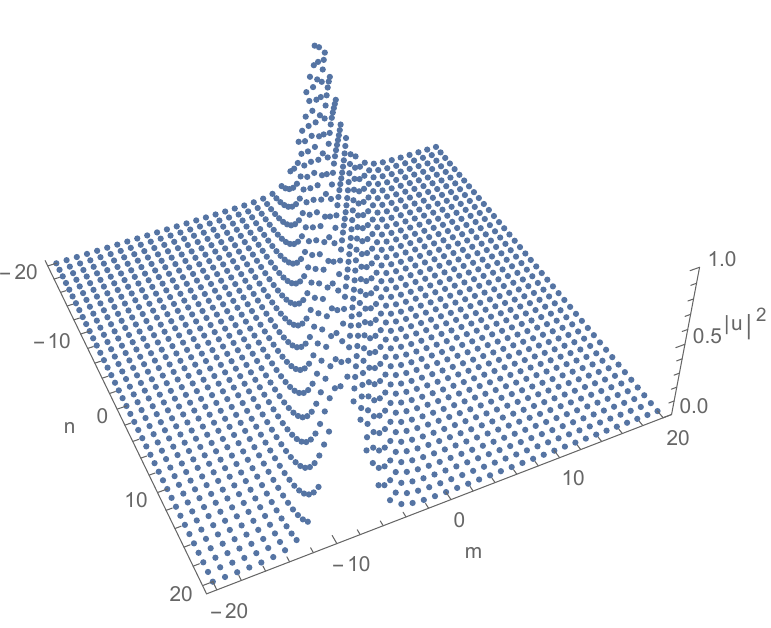}}}
\put(100,-23){\resizebox{!}{4cm}{\includegraphics{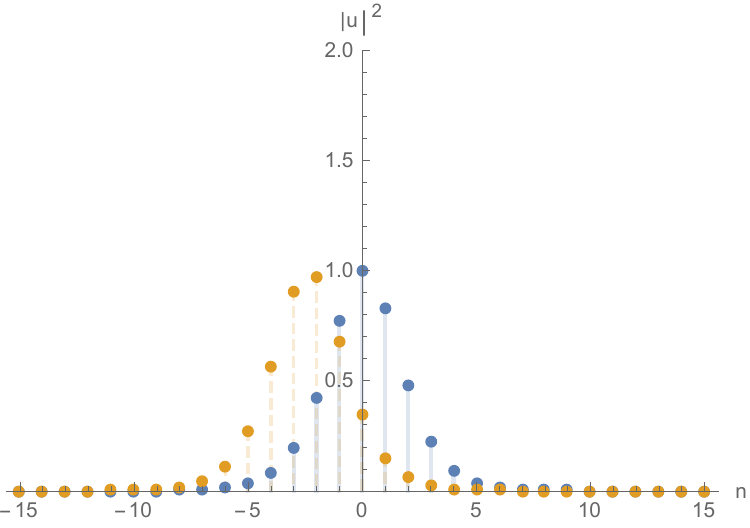}}}
\end{picture}
\end{center}
\vskip 20pt
\begin{center}
\begin{minipage}{15cm}{\footnotesize
\quad\qquad\qquad\qquad\qquad\qquad(a)\qquad\qquad\qquad\qquad\qquad\qquad\qquad\qquad\qquad\qquad\quad\quad\quad (b)}
\end{minipage}
\end{center}
\vskip 10pt
\begin{center}
\begin{picture}(120,80)
\put(-120,-23){\resizebox{!}{4cm}{\includegraphics{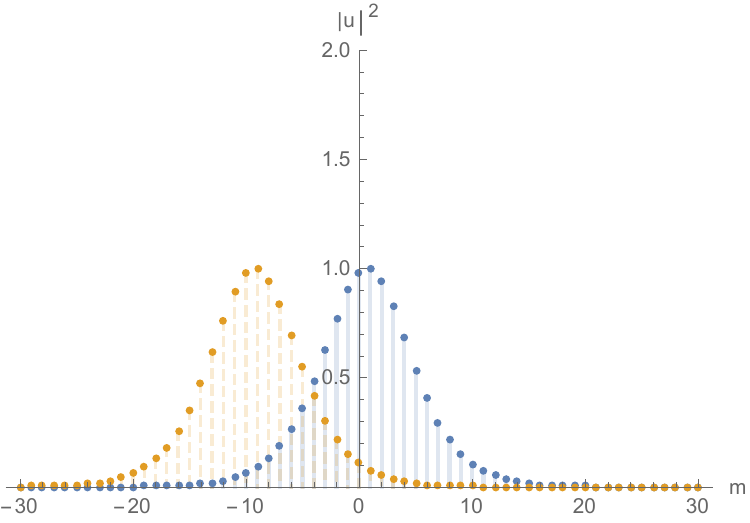}}}
\put(100,-23){\resizebox{!}{4cm}{\includegraphics{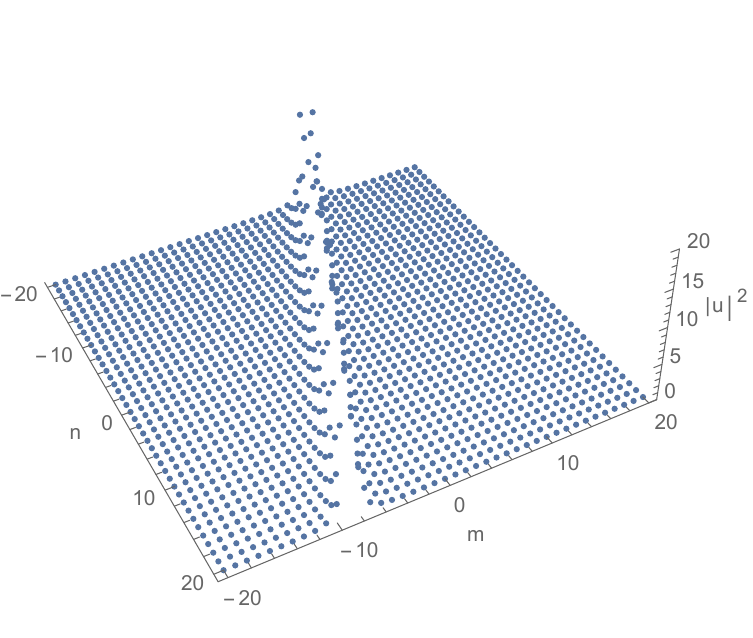}}}
\end{picture}
\end{center}
\vskip 20pt
\begin{center}
\begin{minipage}{15cm}{\footnotesize
\quad\qquad\qquad\qquad\qquad\qquad(c)\qquad\qquad\qquad\qquad\qquad\qquad\qquad\qquad\qquad\qquad\quad\quad\quad (d)\\
{\bf Fig. 1} Shape and motion with $|u|^2$ given by \eqref{u-d-ca-I} for $k_1=1+2i, p=2, q=0.5, \rho_1^0=1$ and $c_1=1+i$.
(a) 3D-plot for $\delta=1$.
(b) waves in blue and yellow stand for plot (a) at $m=-2$ and $m=4$, respectively.
(c) waves in blue and yellow stand for plot (a) at $n=-1$ and $n=3$, respectively.
(d) 3D-plot for $\delta=-1$.}
\end{minipage}
\end{center}

The wave package in the nonlocal case ($\sigma=-1$) has the form
\begin{align}
\label{u-d-ca-II}
|u|^2=\dfrac{4\chi^{2}A^n_1B^m_1}{C_2+C_2^{-1}-2\delta \cos(2n\arctan \theta_1+2m\arctan\theta_2)},
\end{align}
where $A_1$ and $B_1$ are defined by \eqref{ABC1} and
\begin{align}
& \label{theta12-def}
C_2=4\chi^{2}|c_1\rho_1^0|^{-2}, \quad \theta_1=\frac{2 p\chi}{\lambda^{2}+\chi^{2}-p^{2}}, \quad
\theta_2=\frac{2 q\chi}{\lambda^{2}+\chi^{2}-q^{2}}.
\end{align}
The solution \eqref{u-d-ca-II} has oscillatory phenomenon due to
the involvement of cosine function in denominator, which is singular for $C_2=1$
and nonsingular for $C_2\neq 1$. We illustrate soliton \eqref{u-d-ca-II} in Figure 2.
\begin{center}
\begin{picture}(120,100)
\put(-120,-23){\resizebox{!}{4cm}{\includegraphics{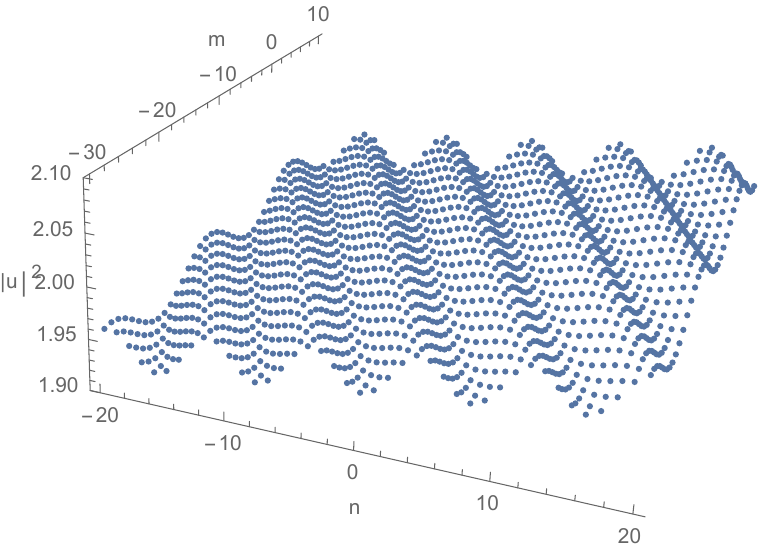}}}
\put(100,-23){\resizebox{!}{4cm}{\includegraphics{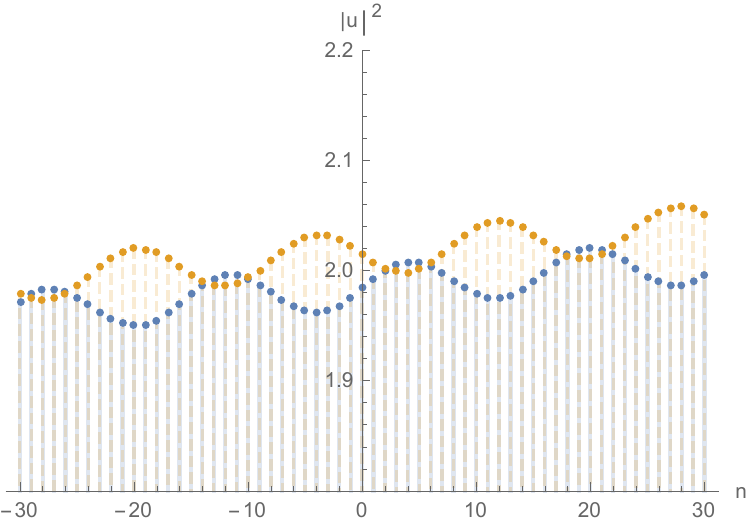}}}
\end{picture}
\end{center}
\vskip 20pt
\begin{center}
\begin{minipage}{15cm}{\footnotesize
\quad\qquad\qquad\qquad\qquad\qquad(a)\qquad\qquad\qquad\qquad\qquad\qquad\qquad\qquad\qquad\qquad\quad\quad\quad (b)}
\end{minipage}
\end{center}
\vskip 10pt
\begin{center}
\begin{picture}(120,80)
\put(-120,-23){\resizebox{!}{4cm}{\includegraphics{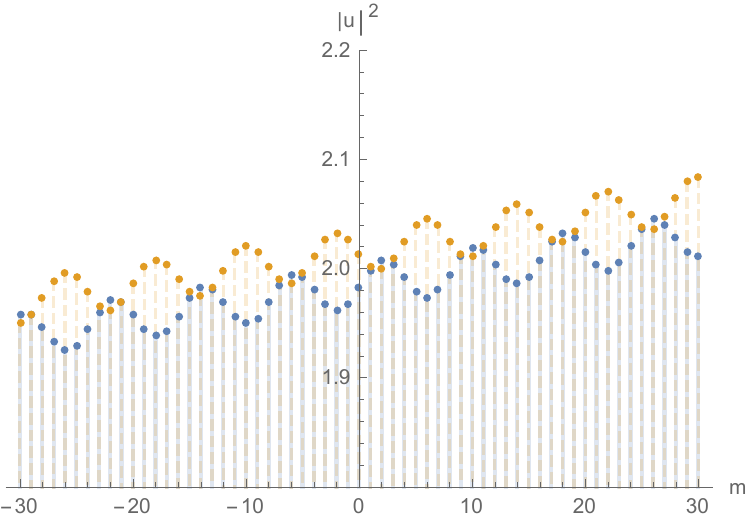}}}
\put(100,-23){\resizebox{!}{4cm}{\includegraphics{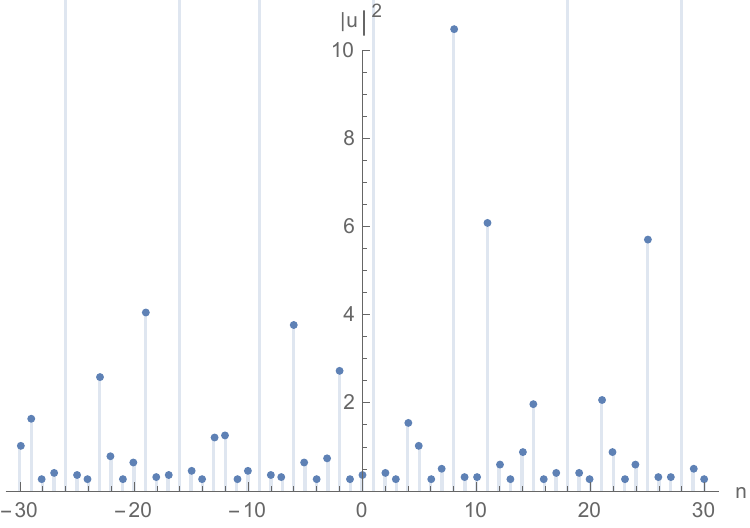}}}
\end{picture}
\end{center}
\vskip 20pt
\begin{center}
\begin{minipage}{15cm}{\footnotesize
\quad\qquad\qquad\qquad\qquad\qquad(c)\qquad\qquad\qquad\qquad\qquad\qquad\qquad\qquad\qquad\qquad\quad\quad\quad (d)\\
{\bf Fig. 2} Shape and motion with $|u|^2$ given by \eqref{u-d-ca-II} for $p=1,~q=2,~\rho_1^0=1$ and $\delta=1$.
(a) 3D-plot for $k_1=-0.01+10i$ and $c_1=1+i$.
(b) waves in blue and yellow stand for plot (a) at $m=-10$ and $m=10$, respectively.
(c) waves in blue and yellow stand for plot (a) at $n=-20$ and $n=20$, respectively.
(d) 2D-plot for $k_1=0.5i$ and $c_1=1$ at $m=-2$.}
\end{minipage}
\end{center}
The 2-soliton solutions reads $u=f_2/f_1$ with
\begin{subequations}
\label{u-soli-2}
\begin{align}
& f_{1}=1+\delta\sum^{2}_{i=1}\sum^{2}_{j=1}\dfrac{\Theta_i\Theta^*_{j,\sigma}}
{\varGamma_{ij}^{[ij]}}+\dfrac{\tau_{12}^{[12]}
(\varGamma_{12}^{[21]}-\varGamma_{11}^{[22]})^{2}}{(\varGamma_{12}^{[21]}\varGamma_{11}^{[22]})^{2}}, \\
& f_{2}=\sum_{i=1}^{2}\Theta_i+\delta\Theta_1\Theta_2
\sum_{j=1}^2\Theta^*_{j,\sigma}(\Lambda_{1j}^{[2j]}/\varGamma_{1j}^{[2j]})^2,
\end{align}
\end{subequations}
and the 3-soliton solutions are described as $u=g_2/g_1$ with
\begin{subequations}
\label{u-soli-3}
\begin{align}
& g_{1}=1+\delta\bigg(\sum^{3}_{i=1}\sum^{3}_{j=1}\dfrac{\Theta_i\Theta^*_{j,\sigma}}{\varGamma_{ij}^{[ij]}}
+\Upsilon\frac{\prod_{j=1}^{3}\Theta_j\Theta_{j,\sigma}^{*}}
{\prod^{3}_{i,j=1}\varGamma_{ij}^{[ij]}}\bigg)
+\sum_{1\leq i<j}^{3}\dfrac{\tau_{ij}^{[ij]}
(\varGamma_{ii}^{[jj]}-\varGamma_{ij}^{[ji]})^{2}}{(\varGamma_{ii}^{[jj]}\varGamma_{ij}^{[ji]})^{2}} \nn \\
&\qquad +\sum_{i=1,j\neq l\neq i}^3\frac{\tau_{li}^{[ij]}
(\varGamma_{ii}^{[jl]}-\varGamma_{il}^{[ji]})^{2}}{(\varGamma_{ii}^{[jl]}\varGamma_{il}^{[ji]})^{2}}, \\
& g_{2}=\sum_{i=1}^{3}\Theta_i+\delta\sum_{1\leq i<l}^3\sum_{j=1}^3\Theta_i\Theta_l\Theta^*_{j,\sigma}
(\Lambda_{ij}^{[lj]}/\varGamma_{ij}^{[lj]})^2+\dfrac{\Theta_1\Theta_2\Theta_{3}}
{\prod^{3}_{i,j=1}\varGamma_{ij}^{[ij]}}
\sum_{1\leq i<j,l\neq i,j}^{3}\Theta^*_{i,\sigma}\Theta^*_{j,\sigma} \nn \\
& \qquad
\cdot\big[a_{ll}\varGamma_{il}^{[jl]}\big(\varGamma_{ii}^{[jj]}(\varGamma_{ji}^{[lj]}
+\varGamma_{ij}^{[li]}-\varGamma_{li}^{[lj]})+\varGamma_{ij}^{[ji]}(\varGamma_{li}^{[lj]}
-\varGamma_{ii}^{[lj]}-\varGamma_{jj}^{[li]})\big)\big]^2,
\end{align}
\end{subequations}
where
\begin{subequations}
\label{nota}
\begin{align}
& \label{alpha}
\varGamma_{ij}^{[lk]}=a_{ij}a_{lk}, \quad
\Lambda_{ij}^{[lk]}=a_{ij}-a_{lk}, \quad
\Theta_j=c_j\rho_j, \quad \tau_{ij}^{[lk]}=\Theta_l\Theta_k\Theta^*_{i,\sigma}\Theta^*_{j,\sigma}, \\
& \Upsilon=\big(\varGamma_{11}^{[12]}\varGamma_{33}^{[23]}(\varGamma_{22}^{[31]}-\varGamma_{21}^{[32]})+\varGamma_{11}^{[13]}
\varGamma_{22}^{[32]}(\varGamma_{21}^{[33]}-\varGamma_{23}^{[31]})
+\varGamma_{12}^{[13]}\varGamma_{21}^{[31]}(\varGamma_{23}^{[32]}-\varGamma_{22}^{[33]})\big)^{2}.
\end{align}
\end{subequations}

\subsubsection{Jordan-block solutions}

If $\Ga_2$ is a Jordan-block matrix as $\Ga_2=\Ga_{\ty{J}}(k_1)$, then we have
\begin{align}
\label{Krtc-Jor}
\br_2=\bF\cdot\bE_{\ty{J}}, \quad \tc_2=\bE^{\st}_{\ty{J}}\cdot\bH, \quad \cd{\bM}_2=\bF\cdot\bG\cdot\bH^*,
\end{align}
with $\bF=\bF_{\ty{J}}(k_1)$, $\bH=\bH_{\ty{J}}(\{c_j\}^{N}_{1})$ and $\bG=\bG_{\ty{J}}(k_1)$.
In this case, $u$ in \eqref{dcmKdV-solu-u} leads to Jordan-block solutions.
In the case of $N=2$, we know that the simplest Jordan-block solution is $u=h_2/h_1$ with
\begin{subequations}
\label{u-Jor-1}
\begin{align}
& h_{1}=a_{11}^{8}+\delta\rho_1\rho^{*}_{1,\sigma}
\big(a_{11}^{4}(-2\sigma c^*_2(a_{11}b-3c_2)
+a_{11}(a_{11}b-2c_2)b^*)+\delta|c_2|^4\rho_1\rho^{*}_{1,\sigma}\big), \\
& h_{2}=a_{11}^{3}\rho_1\big(a_{11}^{5}b+c_2^2\delta\sigma
\big(4c^*_2-\sigma a_{11}b^*\big)\rho_{1}\rho^{*}_{1,\sigma}\big),
\end{align}
\end{subequations}
where $b=c_{1}+c_{2}\kappa$ and $\kappa=2\big(pn/(k^2_1-p^2)+qm/(k_1^2-q^2)\big)$.

\section{Continuum limits of local and nonlocal discrete cmKdV equation}

In this section, we will consider the limiting equations that we retrieve from the
local and nonlocal discrete cmKdV equations \eqref{lcmKdV-1} and \eqref{lcmKdV-2} by
shrinking the lattice grid to a continuous set of values corresponding to spatial and temporal
coordinates. For this purpose, straight continuum limit and
full continuum limit will be performed firstly on the plane-wave factor
\begin{align}
\label{varrho-def}
\rho=\bigg(\frac{p-k}{p+k}\bigg)^{n}\bigg(\frac{q-k}{q+k}\bigg)^{m}\rho^0,
\end{align}
where $\rho^0$ is a constant. Then the above limits will be manipulated on the
equations \eqref{lcmKdV-1} and \eqref{lcmKdV-2} to yield local and nonlocal semi-discrete and continuous cmKdV equations.
The fundamental formula that we will use is the following well-known limit
\begin{align}
\label{dc-re}
\lim\limits_{n\rightarrow\infty}(1+k/n)^n=e^k.
\end{align}
In addition, exact solutions to the resulting local and nonlocal semi-discrete cmKdV equation, as well as
local and nonlocal continuous cmKdV equation, will be discussed. For convenience, we just consider the 1-soliton solution.

\subsection{Straight continuum limit}

The so-called straight continuum limit is to rewrite the first factor in \eqref{varrho-def} in the form \eqref{dc-re}.
We take limit for discrete variable $n$ and lattice parameter $p$ as
\begin{align}
\label{np-1imit}
n\rightarrow \infty,\quad p\rightarrow \infty,\quad \xi=n/p\thicksim O(1).
\end{align}
The plane-wave function \eqref{varrho-def} becomes
\begin{equation}
\label{sd-pwf-scl}
\rho(n,m)\rightarrow \varrho:=\varrho(\xi,m)=\bigg(\frac{q-k}{q+k}\bigg)^{m}\exp(-2k\xi)\varrho_0,
\end{equation}
and $u(n,m)\rightarrow\omega(\xi,m)$. Then from \eqref{lcmKdV-1} and \eqref{lcmKdV-2} we know that
$\mathbb{\omega}(\xi,m)$ satisfies the local and nonlocal semi-discrete cmKdV equation
\begin{align}
\label{cnsd-cmKdV}
\partial_\xi(\omega+\wh{\omega})=2(q^2-\delta(\omega+\wh{\omega})(\omega^*_{\sigma}+\wh{\omega}^*_{\sigma}))^{\frac{1}{2}}(\wh{\omega}-\omega),
\end{align}
which is preserved under transformations $\omega\rightarrow -\omega$ and $\omega\rightarrow \pm i\omega$.
When $\sigma=1$, $\eqref{cnsd-cmKdV}$ is the local semi-discrete cmKdV equation, while when $\sigma=-1$
$\eqref{cnsd-cmKdV}$ is the reverse-$(\xi,m)$ nonlocal semi-discrete cmKdV equation.

We observe that for solutions of the semi-discrete nonlinear equations, they share the same form of those
solutions to the discrete nonlinear counterparts, up to a replacement of plane-wave
factors (cf. \cite{XFZ-TMPH,XZS-SAPM}). Thus, for solutions to the local and nonlocal semi-discrete cmKdV equation
\eqref{cnsd-cmKdV}, we have the following result.
\begin{Thm}
\label{cnsd-cmKdV-Thm}
The function
\begin{align}
\label{cnsd-cmKdV-w}
\omega=\tc_{2}(\bI+\delta\sigma\cd{\bM}_2\cd{\bM}^*_{2,\sigma})^{-1}\br_2,
\end{align}
solves the local and nonlocal semi-discrete cmKdV equation \eqref{cnsd-cmKdV},
provided that the entities satisfy
\begin{subequations}
\label{DES-cnsd}
\begin{align}
& \label{KM-rtc-sd}
\Ga_2 \cd{\bM}_2+\sigma\cd{\bM}_2\Ga^*_2=\varepsilon\br_2\, \tc^{*}_{2}, \\
& \label{r2-solu}
\br_2=(q\bI-\Ga_2)^{m}(q\bI+\Ga_2)^{-m}\mbox{exp}(-2\Ga_{2}\xi)\bC,
\end{align}
\end{subequations}
where $\Ga_2$ is a $N\times N$ constant matrix, $\bC$ is a $N$-th constant column vector and
$\varepsilon^2=\varepsilon^{*^2}=\delta$.
\end{Thm}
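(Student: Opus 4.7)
The plan is to obtain Theorem \ref{cnsd-cmKdV-Thm} as the straight continuum limit of Theorem \ref{dcmKdV-solu-Thm}. Since the shape of the solution formula \eqref{cnsd-cmKdV-w} is identical to that of \eqref{dcmKdV-solu-u}, and only the meaning of the plane-wave factor $\br_2$ changes under the scaling \eqref{np-1imit}, the argument splits into two essentially independent tasks: first, verify that the discrete cmKdV equation \eqref{lcmKdV-1} (and equivalently \eqref{lcmKdV-2}) reduces to the semi-discrete equation \eqref{cnsd-cmKdV}; second, verify that the discrete DES \eqref{dcmKdV-so-nT} reduces to the semi-discrete DES \eqref{DES-cnsd}. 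Once both limits are established, the conclusion follows from Theorem \ref{dcmKdV-solu-Thm} together with the continuity of the Cauchy-matrix solution in the parameter $1/p$.

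For the first task, I would Taylor-expand all $\wt{\phantom{a}}$-shifted quantities in $1/p$, namely $\wt{\omega}=\omega+p^{-1}\partial_\xi\omega+O(p^{-2})$ and $\wh{\wt{\omega}}=\wh{\omega}+p^{-1}\partial_\xi\wh{\omega}+O(p^{-2})$. Squaring the definition of $\mu_\sigma$ gives $\mu_\sigma^{2}=p^{2}-4\delta\omega\omega^{*}_{\sigma}+O(p^{-1})$, whence
\[
\mu_\sigma=p-2\delta p^{-1}\omega\omega^{*}_{\sigma}+O(p^{-2}),\qquad \wt{\nu}_\sigma=\nu_\sigma+O(p^{-1}).
\]
Substituting these expansions into \eqref{lcmKdV-1}, the $O(p)$ contributions cancel identically and the coefficient of $p^{0}$ collects into $\partial_\xi(\omega+\wh{\omega})-2\nu_\sigma(\wh{\omega}-\omega)=0$, which is precisely \eqref{cnsd-cmKdV}. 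An analogous expansion of \eqref{lcmKdV-2}, using $\wh{\mu}_\sigma=p+O(p^{-1})$, yields the same limiting equation.

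For the second task, the Sylvester equation \eqref{KM-rtc} and the $\wh{\phantom{a}}$-shift relation \eqref{r2-q} are independent of $p$ and $n$ and therefore pass through the limit unchanged, reproducing \eqref{KM-rtc-sd} and the $m$-dependent factor of \eqref{r2-solu}. The essential step is the $\wt{\phantom{a}}$-shift relation \eqref{r2-p}: substituting $\wt{\br}_2=\br_2+p^{-1}\partial_\xi\br_2+O(p^{-2})$ and collecting powers of $p$, the leading $O(p)$ and $O(1)$ contributions combine into the first-order linear matrix ODE $\partial_\xi\br_2=-2\Ga_2\br_2$, whose general solution is $\br_2(\xi,m)=\exp(-2\Ga_2\xi)\bC(m)$. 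Coupling this to the preserved $m$-shift relation forces $\bC$ to be constant and reproduces \eqref{r2-solu} exactly.

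The main technical obstacle is to justify the formal $1/p$ expansion rigorously: one must check that $(\bI+\delta\sigma\cd{\bM}_2\cd{\bM}^{*}_{2,\sigma})^{-1}$ remains invertible throughout a neighbourhood of the limit and that the expansions above are uniform enough for term-by-term comparison. Both points are controlled by the observation that $\cd{\bM}_2$ is determined by the $p$-independent Sylvester equation \eqref{KM-rtc}, so that the only $p$-dependence of $\omega$ is confined to the analytic factor $\exp(-2\Ga_2\xi)$ hidden inside $\br_2$. Should one prefer a self-contained verification, the alternative route is to mimic the proof of Theorem \ref{dcmKdV-solu-Thm} in \cite{XZ-TMPH}: introduce master functions $\bS^{(i,j)}(\xi,m)$ built from the semi-discrete DES \eqref{DES-cnsd}, derive $\xi$-differentiation identities analogous to the shift relations \eqref{Sij-shift}, and from these extract \eqref{cnsd-cmKdV} after imposing the complex constraint $v=\delta u^{*}_{\sigma}$.
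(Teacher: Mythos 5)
Your proposal is correct and follows essentially the same route as the paper: the authors justify Theorem \ref{cnsd-cmKdV-Thm} precisely by the straight continuum limit \eqref{np-1imit} of Theorem \ref{dcmKdV-solu-Thm}, observing that the semi-discrete solution keeps the same Cauchy-matrix form with the plane-wave factor replaced according to \eqref{sd-pwf-scl}. Your explicit $1/p$ expansions of $\mu_\sigma$, of the shifted quantities, and of the dispersion relation \eqref{r2-p} simply flesh out the computation the paper leaves to the cited references.
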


The 1-soliton solution to equation \eqref{cnsd-cmKdV} reads
\begin{align}
\label{omega-ss-1}
\omega=\frac{c_1 a_{11}^2\varrho_1}{a_{11}^2+\delta |c_1|^2\varrho_1\varrho^*_{1,\sigma}},
\end{align}
where $\varrho_1=\varrho|_{k\rightarrow k_1}$. The carrier wave of \eqref{omega-ss-1} in the local case is
\begin{align}
\label{omega2-ss-1}
|\omega|^2=\Biggl\{
\begin{array}{ll}
\lambda^2\sech^2(\frac{m}{2}\ln B_1-2 \lambda \xi+\ln C_1),& \text{with} \quad \delta=1, \\
\lambda^2\csch^2(\frac{m}{2}\ln B_1-2 \lambda \xi+\ln C_1),& \text{with} \quad \delta=-1,\\
\end{array}
\end{align}
where $B_1$ and $C_1$ are given in \eqref{ABC1}.

When $\delta=1$, the solution \eqref{omega2-ss-1} is nonsingular and the wave
propagates with initial phase $\ln C_1$, amplitude $\lambda^2$,
top trajectory
\begin{align}
\label{tt}
\xi=(m\ln B_1+2\ln C_1)/(4\lambda),
\end{align}
and velocity $\xi'(m)=\ln B_1/(4\lambda)$. When $\delta=-1$, the solution \eqref{omega2-ss-1} has singularities along with point trace \eqref{tt}.
We illustrate the wave \eqref{omega2-ss-1} with nonsingular in Figure 3.

\vskip30pt
\begin{center}
	\begin{picture}(120,50)
		\put(-180,-23){\resizebox{!}{3.3cm}{\includegraphics{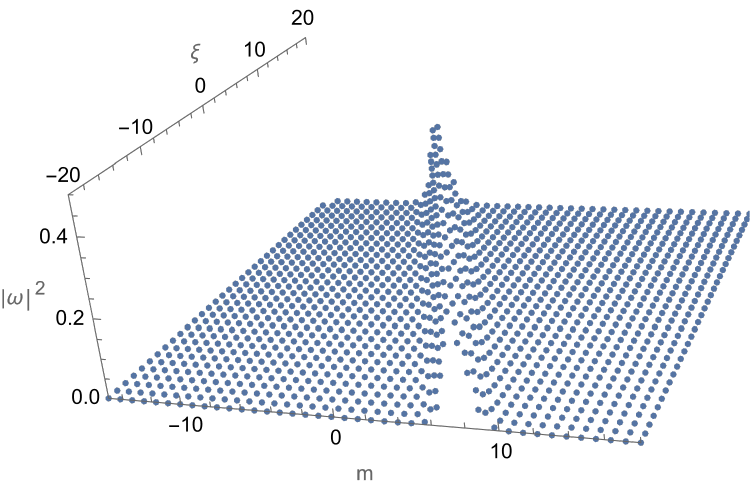}}}
		\put(-10,-23){\resizebox{!}{3.3cm}{\includegraphics{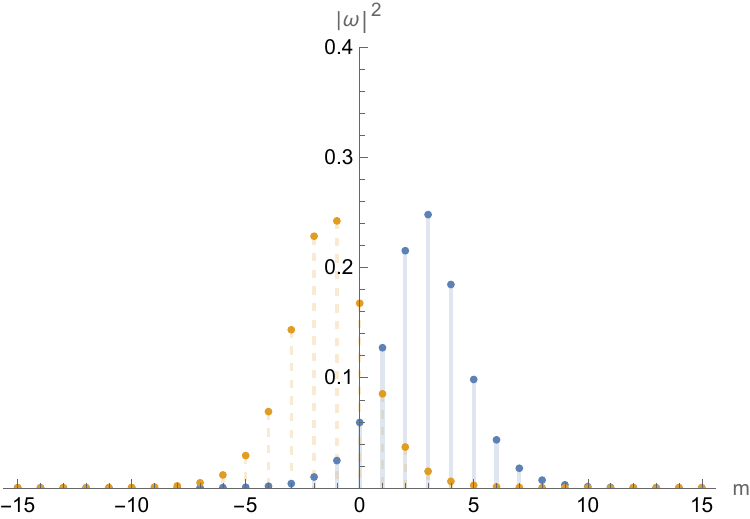}}}
		\put(150,-23){\resizebox{!}{3.3cm}{\includegraphics{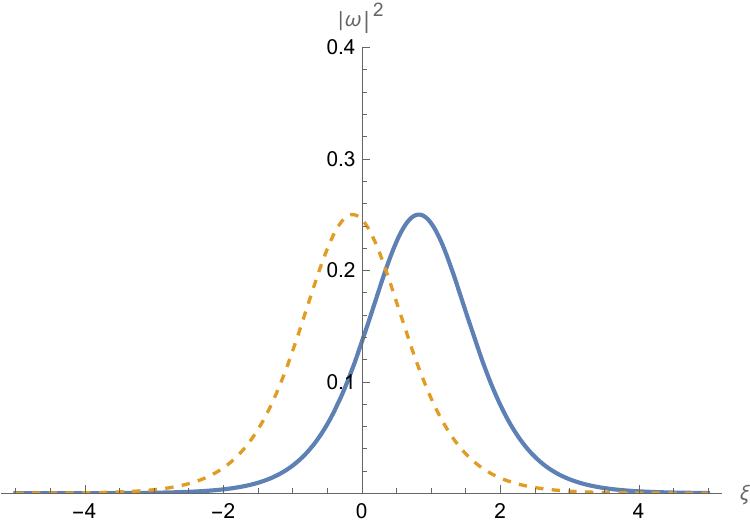}}}
	\end{picture}
\end{center}
\vskip 20pt
\begin{center}
	\begin{minipage}{15cm}{\footnotesize
			\qquad\qquad\quad(a)\qquad\qquad\qquad\qquad\qquad\qquad\qquad\qquad(b) \qquad\qquad\qquad\qquad\qquad\qquad\qquad\qquad (c)
			\\
{\bf Fig. 3} Shape and motion of $|\omega|^2$ given by \eqref{omega2-ss-1} for $\delta=1$, $k_1=(1+i)/2$,
$~\rho_1^0=1$ and $c_1=1+i$. (a) 3D-plot.
			(b) waves in blue and yellow stand for plot (a) at $\xi=-1$ and $\xi=1$, respectively.
			(c) waves from plot (a) at $m=-1$ and $m=1$ respectively shown with solid and dashed hatch.
		}
	\end{minipage}
\end{center}

The wave package in the nonlocal case  has the form
\begin{align}
\label{U-d-ca-II}
|\omega|^2=\dfrac{4\chi^{2}B_1^m\exp(-4\lambda\xi)}{C_2+C_2^{-1}-2\delta\cos(2m\arctan\theta_2-4\chi\xi)},
\end{align}
where $C_2$ and $\theta_2$ are defined by \eqref{theta12-def}.
If $C_2=1$, solution \eqref{U-d-ca-II} has singularities along the point trace
\begin{align}
\label{U-m-singu}
\xi=\Biggl\{
\begin{array}{ll}
\frac{1}{2\chi}(\iota\pi+m\arctan \theta_2),  \quad \delta=1, \\
\frac{1}{4\chi}((2\iota+1)\pi+2m\arctan \theta_2),  \quad \delta=-1, \\
\end{array}
\end{align}
where $\iota\in \mathbb{Z}$. If $C_2 \neq 1$, solution \eqref{U-d-ca-II} is nonsingular and reaches its extrema along the point trace
\begin{align}
\xi=\frac{1}{4\chi}\left(2m\arctan\theta_2+2\iota\pi+\arcsin\frac{(C_2+C_2^{-1})\lambda}{2\delta\sqrt{\lambda^2+\chi^2}}\right),
\end{align}
and the velocity is $\xi'(m)=\arctan \theta_2/(2\chi)$.
We illustrate the wave \eqref{U-d-ca-II} with nonsingular in Figure 4.
\vskip30pt
\begin{center}
	\begin{picture}(120,50)
		\put(-180,-23){\resizebox{!}{3.3cm}{\includegraphics{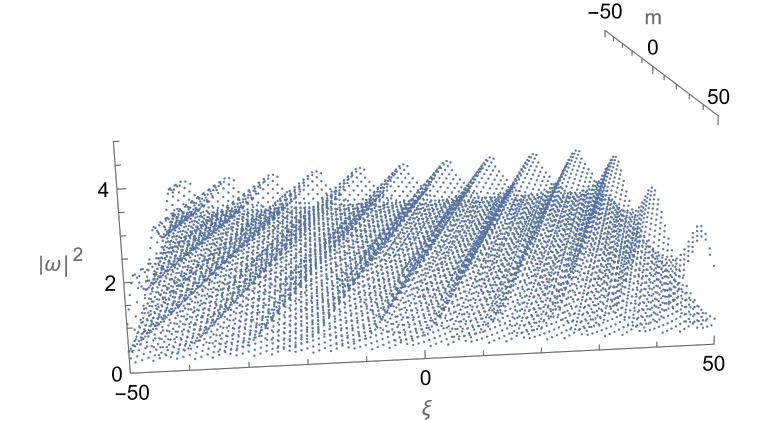}}}
		\put(10,-23){\resizebox{!}{3.3cm}{\includegraphics{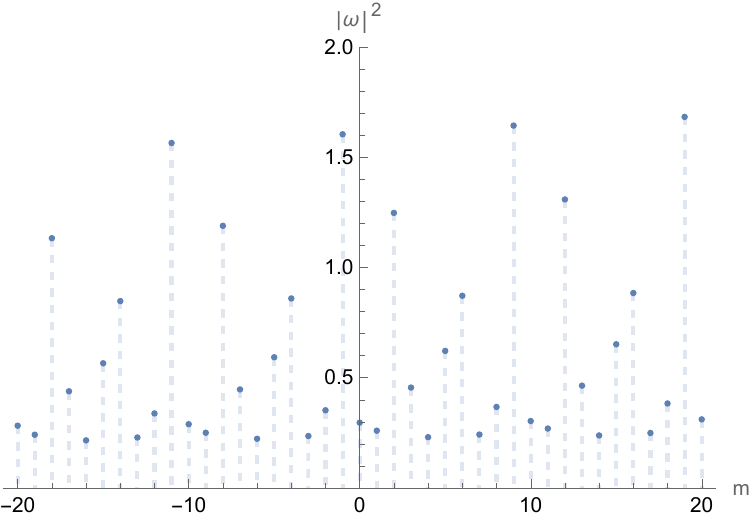}}}
		\put(160,-23){\resizebox{!}{3.2cm}{\includegraphics{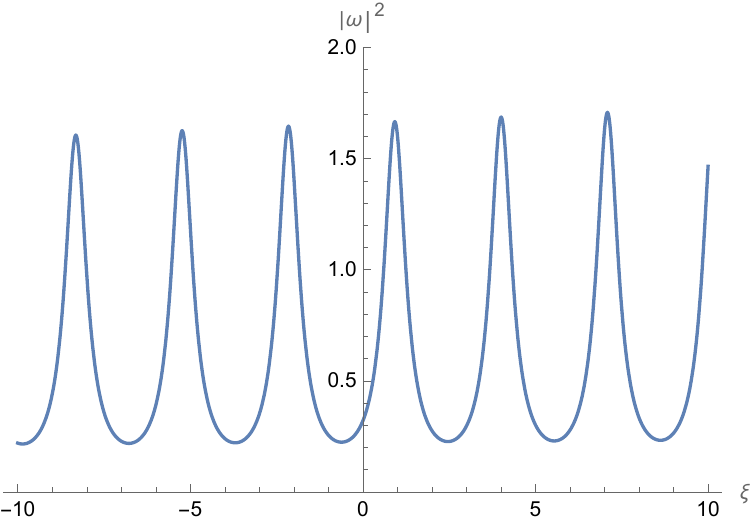}}}
	\end{picture}
\end{center}
\vskip 20pt
\begin{center}
	\begin{minipage}{15cm}{\footnotesize
			\qquad\qquad\qquad\quad(a)\qquad\qquad\qquad\qquad\qquad\qquad\qquad\quad  (b) \qquad\qquad\qquad\qquad\qquad\qquad\qquad\quad (c)
			\\
{\bf Fig. 4} Shape and motion of $|\omega|^2$ given by \eqref{U-d-ca-II} for $\delta=1$, $k_1=-0.001+0.51i$, $~\rho_1^0=1$ and $c_1=1.5$.
			(a) 3D-plot.
			(b) wave from plot (a) at $\xi=1$.
			(c) wave from plot (a) at $m=-1$.
		}
	\end{minipage}
\end{center}

\subsection{Full continuum limit}

This limit can be obtained by investigating the limiting behaviour of the
plane-wave function. From \eqref{sd-pwf-scl}, we have
\begin{align}
\label{sd-pwf-2}
& \varrho_{n}(\xi)=\bigg(\frac{q-k}{q+k}\bigg)^{m}\exp(-2k\xi)\varrho_0 \nn \\
& \qquad\rightarrow \exp\bigg(-2k\xi+\tau q \ln\left(1+\frac{-2k}{q+k}\right)\bigg)\varrho_0 \nn \\
& \qquad\rightarrow \exp\bigg(-2k(\xi+\tau)-\frac{2}{3}\frac{k^3\tau}{q^2}+ O(q^{-4})\bigg)\varrho_0 \nn \\
& \qquad\rightarrow \exp(-2kx-8k^3t+ O(q^{-2}))\varrho_0,
\end{align}
where $\tau=m/q$ and
\begin{equation}
\label{New-var}
x=\xi+\tau,\quad t=\tau/(12q^2),
\end{equation}
which indicates the derivative relations
\begin{align}
\label{der-rel-xt}
\partial_{\tau}=\partial_x+\frac{1}{12q^2}\partial_t,\quad \partial_{\xi}=\partial_x.
\end{align}
After this change of variables and reinterpreting the variable $\omega(\xi,m)$ as
$U(x,t)$, we get at $O(q^{-2})$ the usual local and nonlocal continuous cmKdV equation \eqref{nl-cmKdV}, i.e.,
\begin{align}
\label{cn-cmKdV}
U_t=U_{xxx}+24\delta UU^{*}_{\sigma}U_{x}.
\end{align}

When $\sigma=1$, \eqref{cn-cmKdV} is the local continuous cmKdV equation, while when $\sigma=-1$
\eqref{cn-cmKdV} is the reverse-$(x,t)$ nonlocal continuous cmKdV equation.
This equation is preserved under transformations $U\rightarrow -U$ and $U\rightarrow \pm iU$.
For solutions to the local and nonlocal continuous cmKdV equation \eqref{cn-cmKdV}, we have the following result (cf. \cite{XZ-TMPH}).
\begin{Thm}
\label{so-cnn-cmKdV-Thm}
The function
\begin{align}
\label{cnn-cmKdV-so}
U=\tc_{2}(\bI+\delta\sigma\cd{\bM}_2 \cd{\bM}^*_{2,\sigma})^{-1}\br_2
\end{align}
solves the local and nonlocal continuous cmKdV \eqref{cn-cmKdV}, provided that the entities satisfy
\begin{subequations}
\label{DES-cnn}
\begin{align}
\label{KM-rtcnn-1}
& \Ga_2 \cd{\bM}_2+\sigma\cd{\bM}_2\Ga^*_2=\varepsilon\br_2\, \tc^{*}_{2}, \\
\label{cnn-r2-solu}
& \br_2=\exp(-2\Ga_{2}x-8\Ga_{2}^{3}t)\bC,
\end{align}
\end{subequations}
where $\Ga_2$ is a $N\times N$ constant matrix, $\bC$ is a $N$-th constant column vector and
$\varepsilon^2=\varepsilon^{*^2}=\delta$.
\end{Thm}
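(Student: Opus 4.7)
The plan is to obtain this theorem as the full continuum limit of Theorem~\ref{cnsd-cmKdV-Thm}, in parallel with the derivation of the continuous equation \eqref{cn-cmKdV} from the semi-discrete one in Section~4.2. I begin with the semi-discrete solution $\omega=\tc_{2}(\bI+\delta\sigma\cd{\bM}_2\cd{\bM}^{*}_{2,\sigma})^{-1}\br_2$ of Theorem~\ref{cnsd-cmKdV-Thm}, where $\br_2=(q\bI-\Ga_2)^{m}(q\bI+\Ga_2)^{-m}\exp(-2\Ga_2\xi)\bC$ and the Sylvester equation \eqref{KM-rtc-sd} holds, and apply the substitution $\tau=m/q$, $x=\xi+\tau$, $t=\tau/(12q^{2})$ of \eqref{New-var}.

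The key step is to lift the scalar expansion \eqref{sd-pwf-2} to the matrix level. Writing
\begin{equation*}
(q\bI-\Ga_2)^{m}(q\bI+\Ga_2)^{-m}=\exp\!\bigl(q\tau\,\ln\bigl[(q\bI-\Ga_2)(q\bI+\Ga_2)^{-1}\bigr]\bigr),
\end{equation*}
which is well-defined for large $q$ because the eigenvalues of $\Ga_2$ remain bounded, and expanding in $q^{-1}$ yields
\begin{equation*}
q\tau\,\ln\bigl[(q\bI-\Ga_2)(q\bI+\Ga_2)^{-1}\bigr]=-2\Ga_2\tau-\tfrac{2}{3}\Ga_2^{3}\tau/q^{2}+O(q^{-4}).
\end{equation*}
Combining with $\exp(-2\Ga_2\xi)$ and using $x=\xi+\tau$ together with $\tau/q^{2}=12t$, the $q\to\infty$ limit produces $\br_2\to\exp(-2\Ga_2 x-8\Ga_2^{3}t)\bC$, which is precisely \eqref{cnn-r2-solu}. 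The Sylvester equation \eqref{KM-rtc-sd} contains no $q$ and passes unchanged to \eqref{KM-rtcnn-1}, while formula \eqref{cnsd-cmKdV-w} limits entry-wise to \eqref{cnn-cmKdV-so}. Hence the entire data structure of Theorem~\ref{so-cnn-cmKdV-Thm} is the $q\to\infty$ limit of the data of Theorem~\ref{cnsd-cmKdV-Thm}.

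It then remains to transfer the PDE identity. Section~4.2 already establishes, at order $O(q^{-2})$, that the semi-discrete equation \eqref{cnsd-cmKdV} reduces to \eqref{cn-cmKdV}; since $\omega(\xi,m;q)$ satisfies \eqref{cnsd-cmKdV} for every admissible $q$, its limit $U(x,t)$ must satisfy \eqref{cn-cmKdV}. The main obstacle is to justify the exchange of the $q\to\infty$ limit with the three spatial and one temporal derivatives in \eqref{cn-cmKdV}; for the present solution this is routine because $\br_2$, $\cd{\bM}_2$ and the matrix inverse are smooth exponential-rational functions of $(x,t)$ whose $q$-dependence is real-analytic in $1/q$ near zero, so convergence of all derivatives is uniform on compact subsets of $(x,t)$.

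If one prefers to bypass any limiting argument, a direct verification is available, which is the route taken in \cite{XZ-TMPH}. Differentiating \eqref{cnn-r2-solu} gives $\partial_x\br_2=-2\Ga_2\br_2$ and $\partial_t\br_2=-8\Ga_2^{3}\br_2$; differentiating the Sylvester equation \eqref{KM-rtcnn-1} and invoking uniqueness of its solution under the spectral hypothesis $\mathcal{E}(\Ga_2)\cap\mathcal{E}(-\sigma\Ga_2^{*})=\varnothing$ forces $\partial_x\cd{\bM}_2=-2\Ga_2\cd{\bM}_2$ and $\partial_t\cd{\bM}_2=-8\Ga_2^{3}\cd{\bM}_2$. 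Substituting these into $U_{t}-U_{xxx}-24\delta U U^{*}_{\sigma}U_{x}$, using $\partial(A^{-1})=-A^{-1}(\partial A)A^{-1}$ and the Sylvester identity $\Ga_2\cd{\bM}_2+\sigma\cd{\bM}_2\Ga_2^{*}=\varepsilon\br_2\tc_{2}^{*}$, collapses the expression to zero via Cauchy-matrix bilinear manipulations analogous to those of Proposition~1. That direct computation is the heavy part; the limit-based derivation proposed above is the short, structural route.
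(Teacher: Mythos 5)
Your proposal is correct and follows essentially the same route as the paper: the paper presents this theorem immediately after performing the full continuum limit on the plane-wave factor \eqref{sd-pwf-2} and on the equation (so the solution is understood as the semi-discrete solution with the plane-wave factor replaced by $\exp(-2\Ga_2 x-8\Ga_2^3 t)$), and it defers the direct verification to \cite{XZ-TMPH}, which is exactly your second route. Your matrix-level expansion of $(q\bI-\Ga_2)^m(q\bI+\Ga_2)^{-m}$ and the uniqueness argument for $\partial_x\cd{\bM}_2=-2\Ga_2\cd{\bM}_2$, $\partial_t\cd{\bM}_2=-8\Ga_2^3\cd{\bM}_2$ are sound and merely make explicit what the paper leaves to the cited reference.
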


The 1-soliton solution to equation \eqref{cn-cmKdV} is expressed as
\begin{align}
\label{U-soli-c-cnn}
U=\frac{c_1a_{11}^2\varsigma_1}{a_{11}^2+\delta |c_1|^2\varsigma_1\varsigma^*_{1,\sigma}},
\end{align}
where $\varsigma_1=\exp(-2k_{1}x-8k_{1}^{3}t)\varsigma_1^0$. The dynamical behaviours of solution \eqref{U-soli-c-cnn}
have been given explicitly in \cite{XZ-TMPH}, which is omitted here.

\section{Conclusion}

In this paper, Cauchy matrix scheme for the discrete AKNS equation is reconsidered.
Starting from the DES \eqref{DES}, we define master functions $\bS^{(i,j)}$ \eqref{Sij-def}, which
possess several properties, including recurrence relations, shift relations, similarity invariance and symmetric property.
Based on these properties, we construct four closed-form equations, i.e., systems (\eqref{u2-albe-eps-re-a}, \eqref{u3-albe-eps-re-a}),
(\eqref{u2-albe-eps-re-a}, \eqref{u3-albe-eps-re-b}),
(\eqref{u2-albe-eps-re-b}, \eqref{u3-albe-eps-re-a}) and (\eqref{u2-albe-eps-re-b}, \eqref{u3-albe-eps-re-b}).
Among these systems, (\eqref{u2-albe-eps-re-a}, \eqref{u3-albe-eps-re-a})
and (\eqref{u2-albe-eps-re-b}, \eqref{u3-albe-eps-re-b}) are `proper' discretization of the AKNS equation since they
can reduce to the local discrete mKdV equation. Although the remain two systems
can not reduce to the local discrete mKdV equation, they admit nonlocal reduction. Since one
cannot obtain Cauchy matrix solutions of the nonlocal real equations in the Cauchy matrix reduction framework \cite{XFZ-TMPH,XZS-SAPM},
we in this paper just focus on the local and nonlocal complex reductions of the resulting discrete AKNS equations. As a consequence,
we derive two local and nonlocal cmKdV equations \eqref{lcmKdV-1} and \eqref{lcmKdV-2}. For these two equations, by solving the
Jordan canonical DES \eqref{dcmKdV-so-nT}, soliton solutions and Jordan-block solutions are constructed. Dynamical behaviours of the 1-soliton solution
are analyzed and illustrated, where the 1-soliton solution of the local discrete cmKdV
equation exhibits the usual bell-type structure, while the 1-soliton solution of the nonlocal discrete cmKdV
equation behaves quasiperiodically. Furthermore, continuum limits, especially straight continuum limit and full continuum limit,
are taken to recovery of the local and nonlocal semi-discrete and continuous cmKdV equations. Consequently, 1-soliton solution to the
local and nonlocal semi-discrete cmKdV equation and its dynamical behaviours are presented and analyzed, respectively.

\vskip 20pt
\section*{Acknowledgments}
This project is supported by Zhejiang Provincial Natural Science Foundation (No. LZ24A010007) and
the National Natural Science Foundation of China (No. 12071432).

\vskip 20pt
\section*{Data Availibility Statement}
Data sharing not applicable to this article as no datasets are generated or
analyzed during the current study.

\vskip 20pt
\section*{Conflict of interest}
There are no conflicts of interest to declare.

\end{document}